\newtheorem{notation}{Notation}
\tikzstyle{circ}=[fill=white, draw=black, shape=circle]
\tikzstyle{blank}=[fill=white, draw=white, shape=circle]
\tikzstyle{none}=[fill=none, draw=none]
\tikzstyle{copy}=[fill=white,draw=black,shape=circle,minimum height =
\tikzstyle{varCopy}=[fill=black,draw=black,shape=circle,minimum height=0.2cm,inner sep=0]
\tikzstyle{copy2}=[fill=black,draw=black,shape=circle,minimum height =
\tikzstyle{1morph1}=[fill=white,draw=black,shape=rectangle,
\tikzstyle{1morph}=[fill=white,draw=black,shape=rectangle,minimum
\tikzstyle{2morph2}=[fill=white,draw=black,shape=rectangle,minimum
\tikzstyle{2morph}=[fill=white,draw=black,shape=rectangle,minimum
\tikzstyle{nmorph}=[fill=white,draw=black,shape=rectangle,minimum height = 6cm, minimum
\tikzstyle{1state}=[fill=white,draw=black,regular polygon, regular
\tikzstyle{dbox}=[fill=white,draw=black,dashed,shape=rectangle,minimum
\tikzstyle{vdbox}=[fill=white,draw=black,dashed,shape=rectangle,minimum
\tikzstyle{2state}=[inner sep=0.05cm,fill=white,draw=black,isosceles
\tikzstyle{var2state}=[inner sep=0.05cm,fill=white,draw=black,isosceles
\tikzstyle{g2state}=[inner sep=0.05cm,fill=white,draw=black,isosceles
\tikzstyle{g2effect}=[inner sep=0.05cm,fill=white,draw=black,isosceles
\tikzstyle{2effect}=[inner sep=0.05cm,fill=white,draw=black,isosceles triangle,minimum
\tikzstyle{b2effect}=[inner sep=0.05cm,fill=white, draw=black, isosceles triangle,
\tikzstyle{arrow}=[->]
\tikzset{->-/.style={decoration={markings,mark=at position .5 with {\arrow{>}}},
postaction={decorate}}}
\renewcommand{\P}{\mathcal{P}_{\mathrm{f}}} 
\newcommand{\pf}{\P} 
\newcommand{\rat}{\mathrm{Rat}} 
\renewcommand{\sp}{\mathrm{SP}} 
\newcommand{\R}{\mathbb{R}} 
\newcommand{\G}{\mathcal{G}} 
\newcommand{\eps}{\epsilon} 
\newcommand{\J}{\mathcal{J}} 
\DeclareMathOperator*\argmax{\arg\max}
\newcommand\bind{>\!\!\!>\!\!=} 
\let\epsilon\varepsilon
\title{Sequential games and nondeterministic selection functions}
\author{Joe Bolt}{Department of Computer Science, University of Oxford}{}{}{}
\author{Jules Hedges}{Department of Computer Science, University of Oxford}{}{}{}
\author{Philipp Zahn}{Department of Economics, University of St. Gallen}{}{}{}
\authorrunning{J. Bolt, J. Hedges, and P. Zahn}
\keywords{Backward induction, selection functions, dominated strategies}
\begin{document}

\maketitle

\begin{abstract}
This paper analyses Escard\'o and Oliva's generalisation of selection
functions over a strong monad from a game-theoretic perspective. We focus on the case of the nondeterminism (finite nonempty
powerset) monad $\P$. We use these nondeterministic selection
functions of type $\J^{\P}_R X = (X\rightarrow
R)\rightarrow \P(X)$ to study sequential games, extending previous
work linking (deterministic) selection functions to game theory. 
Similar to deterministic selection functions, which compute a subgame perfect
Nash equilibrium play of a game, we characterise those non-deterministic
selection functions which have a clear game-theoretic interpretation. We show, surprisingly, no non-deterministic selection function exists which computes the set of all subgame perfect Nash equilibrium plays. Instead we show that there are selection functions corresponding to sequential versions of the iterated removal of strictly dominated strategies.
\end{abstract}

\section{Introduction}

Selection functions are an approach to games of perfect information developed by Escard\'o and Oliva in a series of papers beginning with \cite{escardo_oliva_selection_functions_bar_recursion_backward_induction}.
As well as revealing a deep connection between game theory and proof theory,
this approach elucidates the mathematical structure of backward induction, a
method to compute equilibria, showing that it arises from a more primitive algebraic structure known as the selection monad.
A more general form of the selection monad was developed in \cite{escardo_oliva_herbrand_interpretation} for proof-theoretic purposes, with a special case appearing in \cite{hedges_monad_transformers_backtracking_search}.
In this paper we explore this more general structure from a game-theoretic perspective.

A selection function is a function (specifically a type-2 function, also known as a functional or operator, i.e. a function whose domain is a set of functions) of type $(X \to R) \to X$, which we write $\J_R X$.
The operator $\J_R$, which associates to every set $X$ the set of functions $(X \to R) \to X$, is called the \emph{selection monad}, and it carries an algebraic structure known as a \emph{strong monad}.
One consequence of this structure is that there is a family of product-like operators
\[ \bigotimes : \prod_{i = 1}^n \J_R X_i \to \J_R \prod_{i = 1}^n X_i \]
called the \emph{product of selection functions}.

An important class of selection functions are those of the form $\epsilon : \J_{\R^n} X$ satisfying $\epsilon (k) \in \argmax (\pi_i \circ k)$ for all $k : X \to \R^n$, where
\[ \argmax (f) = \{ x : X \mid f (x) \geq f (x') \text{ for all } x' : X \} \]
Consider an $n$-player game of perfect information. That is to say, all players move in turn and the current state of the game is known all players. In round $i$, player $i$ selects a move from the set $X_i$, with payoffs for all players given by $q : \prod_{i = 1}^n X_i \to \R^n$.
The usual solution concept for games of this form is known as subgame perfect
equilibrium -- a strengthening of Nash equilibrium. These equilibria can be computed using a method known as backward induction which dates back at least to Zermelo \cite{schwalbe_walker_zermelo_early_history_game_theory}.

The key result connecting selection functions with game theory is that if $\epsilon_i$ is a sequence of selection functions $(1 \leq i \leq n)$ satisfying $\epsilon_i (k) \in \argmax (\pi_i \circ k)$ for all $k : X_i \to \R^n$, then
\[ \left( \bigotimes_{i = 1}^n \varepsilon_i \right) (q) \]
is the strategic play of some subgame perfect equilibrium (i.e. the play that results when the players' strategies form a subgame perfect equilibrium).
The $\bigotimes$ operator expresses the essence of the backward induction method in a mathematically pure way.
Surprisingly, this result extends to $n = \infty$ when $q$ is continuous, a
highly non-obvious fact about backward induction that is often directly
contradicted in game theory textbooks. For instance, \cite[pp. 107-
108]{fudenberg1991game}:

\begin{quote}
When the horizon [of the sequential game] is infinite, the set of
subgame-perfect equilibria cannot be determined by backward induction from the
terminal date, as it can be in the finitely repeated prisoner's dilemma and in
any finite game of perfect information.
\end{quote}

Notice that the operator $\argmax$ itself has the type $(X \to \R) \to \P (X)$,
which we write $\J^{\P}_\R X$, when $X$ is finite and nonempty. Also note that in this paper $\P$ refers to the finite \emph{nonempty} powerset monad.
In \cite{escardo_oliva_herbrand_interpretation} it is proved that $\J^{\P}_R$ is a strong monad when $R$ is a meet-semilattice (and more generally that $\J^T_R$ is a strong monad when $T$ is a strong monad and $R$ is a $T$-algebra), however this has so far not been addressed through the lens of game theory.
It follows that each choice of meet-semilattice structure on $\R$ induces a product
\[ \bigotimes_{i = 1}^n \argmax : \J_\R \prod_{i = 1}^n X_i \]
and hence, if $q$ is the payoff function of a game, we obtain a set of plays $\left( \bigotimes_{i = 1}^n \argmax \right) (q)$.
One may conjecture that there is some choice of meet-semilattice
structure for which this set is the set of plays of \emph{all} subgame perfect
equilibria.

We prove that this is not the case.
We complement this negative characterization by a positive one: If one replaces the Nash subgame condition
by a weaker condition, which we call \emph{rational}, then the product of
multivalued selection functions does exactly
characterize the set of all plays in line with this condition. Examples of this
weaker condition are weak and strict dominance of strategies -- standard
solution concepts in game theory.

\textbf{Outline.} In section 2.1 we give background on selection functions, and in section 2.2 on higher-order sequential games. In section 3 we define a condition on selection functions, and characterise for 2-round games the product of those selection functions that satisfy this condition. In section 4 we show that the product of selection functions does not compute subgame perfect equilibria in general, and also characterise those cases when it does. In section 5 we give a more general theorem characterising the product of selection functions for $n$-round games, and give concrete examples with selection functions that pick out strictly dominated strategies.

Additional proofs can be found in the appendix.

\section{Preliminaries}

We begin with introducing selection functions and sequential games of perfect information. 

\subsection{Selection functions}

Formally, in this paper we work over some fixed cartesian closed category which contains analogues of finite sets and $\R$.
Since we are working only with finite games, there is no harm in taking this to be simply the category of sets.
We will treat monads `Haskell-style', defined by their action on objects, their unit and their bind operator (which is Kleisli extension with the arguments swapped).

\begin{definition}
	Let $T$ be a strong monad and $\alpha : T R \to R$.
	The $T$-selection monad is defined by $\J^T_R X = (X \to R) \to T X$ with the following monad operations:
	\begin{itemize}
		\item The unit $\eta^{\J^T_R}_X : X \to \J^T_R X$ is defined by $\eta^{\J^T_R}_X (x) = \lambda (k : X \to R) . \eta^T_X (x)$
		\item The bind operator $\bind^{\J^T_R} : \J^T_R X \times (X \to \J^T_R Y) \to \J^T_R Y$ is defined by
		\[ \varepsilon >\!\!\!>\!\!=^{\J^T_R} f = \lambda (k : Y \to R) . \varepsilon (h) \bind^T g \]
		where $g : X \to T Y$ is defined by $g (x) = f (x) (k)$ and $h : X \to R$ is defined by $h (x) = \alpha \left( g (x) \bind^T (\eta^T_R \circ k) \right)$.
	\end{itemize}
\end{definition}

\begin{proposition}[\cite{escardo_oliva_herbrand_interpretation}, Lemma 2.2]
	If $\alpha : T R \to R$ is a $T$-algebra then $\J^T_R$ is a strong monad.
\end{proposition}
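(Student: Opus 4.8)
The plan is to verify, by direct equational reasoning, that the given unit and bind satisfy the three monad laws, and then to exhibit a tensorial strength inherited from that of $T$. Throughout I would use three families of facts: the monad laws of $T$ in Kleisli form, namely $\eta^T_X(x) \bind^T g = g(x)$, $m \bind^T \eta^T_X = m$, and $(m \bind^T g) \bind^T g' = m \bind^T (\lambda x.\, g(x)\bind^T g')$; naturality of $\bind^T$; and the two $T$-algebra laws for $\alpha$, namely the unit law $\alpha \circ \eta^T_R = \mathrm{id}_R$ and the multiplication law, which in Kleisli form reads $\alpha(m \bind^T G) = \alpha\bigl(m \bind^T (\eta^T_R \circ \alpha \circ G)\bigr)$ for every $m : T Z$ and $G : Z \to T R$ (this is the bind-form of $\alpha \circ \mu^T = \alpha \circ T\alpha$).

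The two identity laws are short. For the left identity, $\eta^{\J^T_R}_X(x) \bind^{\J^T_R} f$ evaluated at a continuation $k$ gives $\eta^T_X(x) \bind^T g = g(x) = f(x)(k)$ by the left unit law of $T$, so the modified continuation $h$ never enters and the result is $f(x)$. For the right identity, taking $f = \eta^{\J^T_R}$ gives $g(x) = \eta^T_X(x)$, whence $h(x) = \alpha(\eta^T_X(x) \bind^T (\eta^T_R \circ k)) = \alpha(\eta^T_R(k(x))) = k(x)$ by the left unit law of $T$ followed by the unit law of the algebra; thus $\varepsilon(h) = \varepsilon(k)$ and the expression collapses to $\varepsilon(k) \bind^T \eta^T_X = \varepsilon(k)$ by the right unit law of $T$.

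The crux is associativity. I would unfold both $(\varepsilon \bind^{\J^T_R} f) \bind^{\J^T_R} f'$ and $\varepsilon \bind^{\J^T_R} (\lambda x.\, f(x) \bind^{\J^T_R} f')$ at a fixed continuation $k'$, tracking the modified continuations produced at each stage. Repeated use of the associativity of $\bind^T$ shows that the two results already agree on their $T$-valued tails, so the entire problem reduces to showing that the innermost modified continuations coincide; these are applications of $\alpha$ to expressions of the shape $m \bind^T G$ on one side and $m \bind^T (\eta^T_R \circ \alpha \circ G)$ on the other, so their equality is exactly the algebra multiplication law. This is the step where a genuine $T$-algebra is indispensable rather than an arbitrary map $T R \to R$: without the multiplication axiom the nested applications of $\alpha$ cannot be collapsed and associativity fails. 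I expect this calculation to be the main obstacle.

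Finally, for the strong structure I would define the strength $\mathrm{st}^{\J^T_R}_{A,B} : A \times \J^T_R B \to \J^T_R(A\times B)$ by $\mathrm{st}^{\J^T_R}(a,\varepsilon) = \lambda(k : A\times B \to R).\, \mathrm{st}^T_{A,B}\bigl(a,\ \varepsilon(\lambda b.\, k(a,b))\bigr)$, reusing the strength $\mathrm{st}^T$ of $T$. Checking naturality and the coherence diagrams (compatibility with the unitor and associator of the product, and with $\eta^{\J^T_R}$ and bind) reduces in each case to the corresponding coherence law for $\mathrm{st}^T$ after unfolding definitions, the only additional ingredients being the algebra laws wherever the bind of $\J^T_R$ appears. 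I anticipate this final part to be routine bookkeeping once associativity is in hand.
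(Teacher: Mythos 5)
Your proposal concerns a statement the paper does not prove at all: it is imported by citation from Escard\'o and Oliva (their Lemma~2.2), so there is no in-paper proof to compare against. Judged on its own, your direct verification is correct and is the standard argument. The two unit laws go through exactly as you say, with the algebra unit law $\alpha\circ\eta^T_R=\mathrm{id}_R$ entering only in the right identity. For associativity, your reduction is sound: unfolding both sides at a fixed continuation and applying associativity of $\bind^T$, the tails $\lambda x.\,g_1(x)\bind^T g'$ agree (because the inner bind on the right-hand side is formed from the same data $g'$ and modified continuation as on the left), and what remains is the equality of the two outer modified continuations, which is precisely your Kleisli-form law $\alpha(m\bind^T G)=\alpha\bigl(m\bind^T(\eta^T_R\circ\alpha\circ G)\bigr)$; this is indeed equivalent to $\alpha\circ\mu^T=\alpha\circ T\alpha$ (take $G=\mathrm{id}$ in one direction, and use $m\bind^T G=\mu^T(TG\,m)$ together with $m\bind^T(\eta^T_R\circ\alpha\circ G)=T\alpha(TG\,m)$ in the other). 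Your diagnosis that this is the single point where a genuine $T$-algebra, rather than an arbitrary map $TR\to R$, is indispensable is exactly right. Your strength is also the canonical one: it coincides with $\varepsilon\bind^{\J^T_R}\bigl(\lambda b.\,\eta^{\J^T_R}_{A\times B}(a,b)\bigr)$, and since the paper works over the category of sets, the coherence diagrams for this strength follow routinely once the monad laws hold. The only caveat is that your associativity step is an outline rather than a calculation; a complete write-up should display the two unfolded sides explicitly so that the claimed agreement of the $T$-valued tails, and the exact point where the algebra law is invoked, are visible.
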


As a special case, when $T$ is the identity monad we use the notation $\J_R X = (X \to R) \to X$, which has bind operator
\[ \varepsilon \bind^{\J_R} f = \lambda (k : Y \to R) . g (\varepsilon (k \circ g)) \]
where $g : X \to Y$ is defined by $g (x) = f (x) (k)$.
Since the identity function uniquely makes every type into an algebra of the identity monad, $\J_R$ is a strong monad for every $R$.

Every strong monad $T$ admits a \emph{monoidal product} operator
\[ \otimes : T X \times T Y \to T (X \times Y) \]
(in fact it admits two in general, and we take the `left-leaning' one), and a more general \emph{dependent monoidal product} operator
\[ \otimes : T X \times (X \to T Y) \to T (X \times Y) \]
For example, for the finite nonempty powerset monad $\P$ the monoidal product is given by cartesian product of sets
\[ a \otimes^{\P} b = \{ (x, y) \mid x \in a, y \in b \} \]
and the dependent monoidal product is the `dependent cartesian product'
\[ a \otimes^{\P} f = \{ (x, y) \mid x \in a, y \in f (x) \} \]

For the purposes of this paper we will only need the simple monoidal product of the selection monad, but it is defined in terms of the dependent monoidal product of the underlying monad.
Concretely, the simple monoidal product
\[ \otimes^{\J^T_R} : \J^T_R X \times \J^T_R Y \to \J^T_R (X \times Y) \]
is defined by
\[ \varepsilon \otimes^{\J^T_R} \delta = \lambda (k : X \times Y \to R) . a \otimes^T f \]
where $a : T X$ is defined by $a = \varepsilon \left( \lambda x . \alpha \left( f (x) \bind^T \left( \lambda y . \eta^T_R (k (x, y)) \right) \right) \right)$ and $f : X \to T Y$ is defined by $f (x) = \delta (\lambda y . k (x, y))$.

When $T$ is the identity monad this simplifies to
\[ \varepsilon \otimes \delta = \lambda (k : X \times Y \to R) . (a, f (a)) \]
where $a = \varepsilon (\lambda x . q (x, f x))$ and $f (x) = \delta (\lambda y . q (x, y))$.
Alternatively, when $T = \P$ it is
\[ \varepsilon \otimes \delta = \lambda (k : X \times Y \to R) . \{ (x, y) \mid x \in a, y \in f (x) \} \]
where $a = \varepsilon \left( \lambda x . \bigvee \{ q (x, y) \mid y \in f (x) \} \right)$ and $f (x) = \delta (\lambda y . q (x, y))$, where the $\P$-algebra is written $\bigvee : \P R \to R$.

\subsection{Higher-order sequential games}

A sequential game of perfect information is one in which players take turns sequentially, one player per round, with each player being able to perfectly observe the moves made in earlier rounds.

Higher-order sequential games are a generalisation introduced in \cite{escardo_oliva_selection_functions_bar_recursion_backward_induction} in which each player carries a selection function that defines what they consider `rational'. Ordinary (or `classical') sequential games are obtained as the special case in which every player's selection function is $\argmax$. An in-depth discussion of the decision-theoretic and game-theoretic content of higher-order games can be found in \cite{hedges_etal_higher_order_decision_theory,hedges_etal_selection_equilibria_higher_order_games}.

\begin{definition}
	An $n$-round higher order sequential game of perfect information is defined by the following data:
	\begin{itemize}
		\item For each player $1 \leq i \leq n$, a finite nonempty set $X_i$ of choices
		\item A set $R$ of outcomes, and an outcome function $q : \prod_{i = 1}^n X_i \to R$
		\item For each player $1 \leq i \leq n$, a multi-valued selection function $\epsilon_i : \J_R^{\P} X_i$
	\end{itemize}
\end{definition}

There are several small variants of this definition in the literature, which replace multi-valued selection functions with related higher order functions.
The original definition in
\cite{escardo_oliva_selection_functions_bar_recursion_backward_induction}
equipped players with a `quantifier' of type $(X \to R) \to R$ rather than a
selection function, with the motivating example being $\max : (X \to \R) \to
\R$. This was then generalised to multi-valued quantifiers of type $(X \to R)
\to \mathcal P (R)$ in \cite{escardo11}. The definition stated above was given in
\cite[section 1.3]{hedges_towards_compositional_game_theory}, and is based on
the definition of Nash equilibrium for higher-order \emph{simultaneous} games in
\cite{hedges_etal_selection_equilibria_higher_order_games}. Here, we apply this
definition in the context of sequential games. 

We will often focus on 2-player sequential games, in which we write the sets of choices as $X$ and $Y$, and the selection functions as $\epsilon : \J_R^{\P} X$ and $\delta : \J_R^{\P} Y$.

\begin{definition}
	An $n$-round classical sequential game of perfect information is a sequential
  game in which the set of outcomes is $R = \R^n$, and the selection functions are
	\[ \epsilon_i (k) = \argmax (\pi_i \circ k) \]
\end{definition}

A classical game is one in which each player receives a real-valued outcome, and
players act such as to maximise their individual outcome, with no preference
over the outcomes of the other players. Note that we will refer to such outcomes
as payoffs, and to the outcome functions as payoff functions as it is the standard
in classical game theory. 
By different choices of $q$, this allows representation of both conflict and cooperative situations, and games with aspects of both, as is standard in game theory.

\begin{definition}
	A \emph{strategy} for player $i$ in a sequential game is a function $\sigma_i : \prod_{j = 1}^{i - 1} X_j \to X_i$ that makes a choice for player $i$ contingent on the choices observed in previous rounds.
	A \emph{strategy profile} is a tuple $\sigma : \prod_{i = 1}^n \left( \prod_{j = 1}^{i - 1} X_j \to X_i \right)$ consisting of a strategy for each player.
	The set of strategy profiles of a game is written $\Sigma$.
\end{definition}

Notice that a strategy for the first player is just an element of $X_1$, up to isomorphism.

\begin{definition}
	A \emph{play} of a sequential game is a tuple $x : \prod_{i = 1}^n X_i$ of choices.
	Every strategy profile $\sigma$ induces a play $\mathbf P (\sigma)$, called the \emph{strategic play} of $\sigma$, by `playing out', or more precisely by the course-of-values recursion
	\[ (\mathbf P (\sigma))_i = \sigma_i ((\mathbf P (\sigma))_1, \ldots, (\mathbf P (\sigma))_{i - 1}) \]
	This defines a function $\mathbf P : \Sigma \to \prod_{i = 1}^n X_i$. 
\end{definition}

This definition includes the base case $(\mathbf P (\sigma))_1 = \sigma_1 ()$, i.e. the choice made by the first player is just the choice that her strategy tells her to play.

\begin{definition}
A \emph{partial play} of a sequential game is a tuple $x :
\prod_{i=1}^j X_i$ for some $j<n$. 
Given a partial play $x$, a tuple of strategies
\[ \sigma : \prod_{i=j+1}^n \left( \prod_{k=1}^{i-1} X_k \rightarrow X_i \right) \]
induces a play $x^\sigma$ called the \emph{strategic extension} of $x$ by $\sigma$, given by
\[
x_i^\sigma = \begin{cases}
x_i & \text{if } i\leq j \\
\sigma_i(x_1^\sigma,\cdots,x_{i-1}^\sigma) & \text{otherwise.}
\end{cases}
\]
\end{definition}

A strategy profile of a 2-player sequential game is a tuple $\sigma : X \times (X \to Y)$, and the strategic play of $\sigma$ is $(\sigma_1, \sigma_2 (\sigma_1))$.
Also notice that the strategic extension of the empty partial play by $\sigma$ is $\mathbf P (\sigma)$.

Selection functions in  general describe what choices, plays or strategies are
`good' or `rational' for a player. But there are several possibilities we can
consider concretely. In the following, looking through the
lens of game theory, we investigate these candidates.

\section{Well-behaved selection functions}

We will now specify `niceness' constraints for multi-valued selection functions. Perhaps
unsurprisingly, the `niceness' of a multivalued selection function relates to its
interaction with the semilattice $R$. The correct definition of this property is
non-obvious, but does have a natural game theoretic interpretation.

\begin{definition}[Witnessing selection function]
	A multi-valued selection function $\epsilon:\J_R^{\pf}(X)$ is \textit{witnessing} if,
	for all indexing functions $I : X\rightarrow \pf(X\rightarrow R)$, if 
	\[
		x\in \epsilon \bigg(\lambda x'. \bigvee_{p\in Ix'} px' \bigg)
	\]
	then there exists a choice function $p_- : X\rightarrow (X\rightarrow R)$ for $I$
	(so $p_{x'}\in Ix'$ for all $x'$) such that 
	\[
		x \in \epsilon(\lambda x'. p_{x'}x').
	\]
\end{definition}

$Ix$ is thought of as the set of contexts that might arise if $x$ is chosen. The choice
function $p_-$ picks out a `plausible scenario,' a possible context for each choice that
could be made. In game theoretic terms, a witnessing selection function represents a
player that finds a move $x$ acceptable to play only if there is some plausible hypothesis
regarding how later players will behave under which $x$ is an acceptable move. Note that
the use of choice functions here does not invoke the axiom of choice as we are working
with finite nonempty powersets. 

\begin{definition}[Upwards closed selection function]
	A multi-valued selection function $\epsilon : \J_R^{\pf}(X)$ is \textit{upwards
	closed} if, whenever $p_- : X\rightarrow (X\rightarrow R)$ is a choice function for $I$ such
	that $x\in\epsilon (\lambda x'. p_{x'}x')$, it holds that
	\[
		x\in\epsilon\bigg( \lambda x'. \bigvee_{p\in Ix'} px' \bigg).
	\]
\end{definition}

Upwards closure is a converse notion to witnessing. If $x$ is an acceptable choice, then
$x$ remains an acceptable choice in contexts where other possible contexts are added and
then combined with the join operator (this notion is, admittedly, more game theoretically
vague but its interpretation will become clearer in the case where $R = \pf(\R)$ and the
semilattice join is given by union).

A good heuristic for thinking about witnessing and upwards closed selection functions is
as follows. Suppose $\eps : \J_R^{\pf}(X)$ and that $X = \{ x_1,\cdots, x_n\}$ and let $I:
X\rightarrow \pf (X\rightarrow R)$ with $Ix = \{ p^x_1\cdots, p^x_{m_x}\}$. We can
organise this information in a game tree:
\begin{center}\begin{tikzpicture}
	\node (a)[varCopy]{};
	\path (a) +(1.5,1.5) node (a1) [varCopy]{};
	\path (a) +(1.5,-1.5) node (a2) [varCopy]{};
	\draw (a.45) -- node[above,pos=0.4,yshift=3pt]{$x_1$} (a1.225);
	\draw (a.-45) -- node[below,pos=0.4,yshift=-3pt]{$x_n$} (a2.135);

	\path (a1) +(1,1) node (o1)[anchor=west]{$p^{x_1}_1x_1$};
	\path (a1) +(1,-1) node (o2)[anchor=west]{$p^{x_1}_{m_{x_1}}x_1$};
	\draw (a1.north east) -- (o1.west);
	\draw (a1.south east) -- (o2.west);

	\path (a2) +(1,1) node (o3)[anchor=west]{$p^{x_n}_1x_n$};
	\path (a2) +(1,-1) node (o4)[anchor=west]{$p^{x_n}_{m_{x_n}}x_n$};
	\draw (a2.north east) -- (o3.west);
	\draw (a2.south east) -- (o4.west);

	\path (a) +(0.5,0.1) node{$\vdots$};
	\path (a1) +(0.5,0.1) node{$\vdots$};
	\path (a2) +(0.5,0.1) node{$\vdots$};
\end{tikzpicture}\end{center}
A choice function $q_- : X\rightarrow (X\rightarrow R)$ for $I$ then corresponds to
choosing a leaf of this tree for each $x\in X$. Visually (omitting dots for clarity), 

\begin{center}\begin{tikzpicture}
	\node (a)[varCopy]{};
	\path (a) +(1.5,1.5) node (a1) [varCopy]{};
	\path (a) +(1.5,-1.5) node (a2) [varCopy]{};
	\draw[red] (a.45) -- node[text=black,above,pos=0.4,yshift=3pt]{$x_1$} (a1.225);
	\draw[red] (a.-45) -- node[text=black,below,pos=0.4,yshift=-3pt]{$x_n$} (a2.135);

	\path (a1) +(1,1) node (o1)[anchor=west]{$p^{x_1}_1x_1$};
	\path (a1) +(1,-1) node (o2)[anchor=west]{$p^{x_1}_{m_{x_1}}x_1$};
	\draw (a1.north east) -- (o1.west);
	\draw (a1.south east) -- (o2.west);

	\path (a2) +(1,1) node (o3)[anchor=west]{$p^{x_n}_1x_n$};
	\path (a2) +(1,-1) node (o4)[anchor=west]{$p^{x_n}_{m_{x_n}}x_n$};
	\draw (a2.north east) -- (o3.west);
	\draw (a2.south east) -- (o4.west);

	\path (a1) +(1,0.3) node (c1)[anchor=west,text=red]{$q_{x_1}x_1$};
	\draw[red] (a1) -- (c1);

	\path (a2) +(1,-0.4) node(c2)[anchor=west,text=red]{$q_{x_n}x_n$};
	\draw[red] (a2) -- (c2);
\end{tikzpicture}\end{center}
The red subtree then corresponds to the context $\lambda x. q_{x}x$. Contrastingly, the
context $\lambda x. \bigvee_{p\in Ix}px$ corresponds to the collapsed game tree
\begin{center}\begin{tikzpicture}
	\node (a)[varCopy]{};
	\path (a) +(1.5,1.5) node (o1)[anchor=west]{$\bigvee_{p\in Ix_1}px_1$};
	\path (a) +(1.5,-1.5) node (o2)[anchor=west]{$\bigvee_{p\in Ix_n}px_n$};
	\draw (a.north east) -- node[above,pos=0.4,yshift=3pt]{$x_1$} (o1.west);
	\draw (a.south east) -- node[below,pos=0.4,yshift=-3pt]{$x_n$} (o2.west);
	\path (a) +(0.5,0.1) node{$\vdots$};

\end{tikzpicture}\end{center}

A witnessing selection function is a selection function where, if $x$ is an acceptable
play in the collapsed tree, there is some choice of leaves such that $x$ is an acceptable
play in the associated context. An upwards closed selection function has the converse
property: if there is a choice of leaves under which $x$ is an acceptable play, then $x$
is an acceptable play in the collapsed tree.

\begin{example}
	We will show that $\arg\max$ is witnessing but not upwards closed. For a finite
	set $X$, define
	$\arg\max : (X\rightarrow\R)\rightarrow \pf (X)$ by
	\[
		\arg\max (k) = \Big\{ x\in X \Bigm| \forall x'\in X \quad kx \geq kx'
		\Big\}.
	\]
	$\arg\max$ is then a multi-valued selection function with the join operator on
	$\R$ given by $\max$. 

	\underline{Claim 1: $\arg\max$ is witnessing.} \textit{Proof:} Suppose \[x\in
	{\arg\max}\Big( \lambda x'. \max_{p\in Ix'} px' \Big)\]
	for some $I : X\rightarrow \pf(X\rightarrow \R)$. Then $\forall x'\in X$ it holds
	that
	\[
		\max_{p\in Ix} px \geq \max_{p\in Ix'} px'.
	\]
	As $Ix'$ is finite, we can choose $p_{x'}\in Ix'$ such that $p_{x'}x' = \max_{p\in
	Ix'}px'$. Then 
	\[
		x \in \arg\max \Big( \lambda x'. p_{x'}x' \Big).
	\]
	Hence $\arg\max$ is witnessing.
	
	\underline{Claim 2: $\arg\max$ is not upwards closed.} \textit{Proof:} Let
	$X=\{0,1\}$ and let $c_i : X\rightarrow \R$ denote the constant function $x\mapsto
	i$. Define $I : X\rightarrow \P(X\rightarrow \R)$ by
	\begin{align*}
		0 &\mapsto \{c_0\}
		\\
		1 &\mapsto \{c_1,c_{-1}\}
	\end{align*}
	Note that the function $\lambda x. \max_{p\in Ix} px$ is given by
	\begin{align*}
		0&\mapsto 0
		\\
		1&\mapsto 1
	\end{align*}
	and, hence, $\arg\max \Big(\lambda x. \max_{p\in Ix} px\Big) = \{ 1 \}$. Define a
	choice function $p_- : X\rightarrow (X\rightarrow \R)$ for $I$ by $p_0 = c_0$ and
	$p_1 = c_{-1}$. Then $\arg\max \Big( \lambda x. p_xx\Big) = \{0\}$, but $\{ 0
	\}\not\subseteq \{1\}$ and hence $\arg\max$ is not upwards closed.
\end{example}

That $\arg\max$ is witnessing follows from a more general result regarding
multi-valued selection functions for which the semilattice $R$ is total.

\begin{proposition}\label{proof1}
	If the semilattice $R$ is total, then for all sets $X$ and all selection functions
	$\epsilon : \J_R^{\pf}(X)$, $\epsilon$ is witnessing. 
\end{proposition}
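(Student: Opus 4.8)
The plan is to exploit the one feature that distinguishes a total semilattice: for a finite nonempty subset $S \subseteq R$, the join $\bigvee S$ coincides with the maximum of $S$ under the induced total order, and this maximum is attained by some element of $S$. This attainment is the sole place where totality enters, and it lets me build the required choice function directly, using nothing about $\epsilon$ beyond the fact that it is a function. In effect this is the abstract form of Claim~1 of the preceding example, with $\max$ on $\R$ replaced by a general total join.

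Concretely, suppose $x \in \epsilon\big(\lambda x'. \bigvee_{p \in Ix'} px'\big)$ for some indexing function $I : X \to \pf(X \to R)$. First I would fix an arbitrary $x' \in X$ and consider the finite nonempty set $\{\, px' \mid p \in Ix' \,\} \subseteq R$. Since $R$ is total, this set attains its join, so there is some $p \in Ix'$ with $px' = \bigvee_{p \in Ix'} px'$. Selecting one such $p$ for each $x'$ yields a choice function $p_- : X \to (X \to R)$ for $I$ with $p_{x'} x' = \bigvee_{p \in Ix'} px'$ for every $x'$.

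The decisive observation is then that the two context functions $\lambda x'. p_{x'} x'$ and $\lambda x'. \bigvee_{p \in Ix'} px'$ agree at every point of $X$, hence are literally the same function $X \to R$. Feeding equal inputs to $\epsilon$ produces equal outputs, so $x \in \epsilon\big(\lambda x'. \bigvee_{p \in Ix'} px'\big) = \epsilon\big(\lambda x'. p_{x'} x'\big)$, which is exactly the witnessing conclusion.

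I do not expect a genuine obstacle: once totality is in hand, the whole argument collapses to the remark that a finite nonempty subset of a totally ordered set attains its supremum. The only thing to watch is that the chosen $p_{x'}$ must realise the join at $x'$ simultaneously for \emph{every} $x'$ (including the distinguished $x$ itself), so that the equality of context functions is exact rather than a mere inequality; this is automatic, since the choices at distinct points are independent. Attainment of the join is also precisely the property that can fail when $R$ is only a partial order — there $\bigvee_{p} px'$ may strictly exceed every individual value $px'$, so no single choice function reproduces the collapsed context — which is why the totality hypothesis is needed.
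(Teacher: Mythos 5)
Your proof is correct and is essentially identical to the paper's: both arguments observe that totality makes the pointwise join $\bigvee_{p\in Ix'}px'$ attained by some $p_{x'}\in Ix'$, so that the collapsed context $\lambda x'.\bigvee_{p\in Ix'}px'$ and the chosen context $\lambda x'.p_{x'}x'$ are the same function, whence membership under $\epsilon$ transfers trivially. Your added remarks (that nothing about $\epsilon$ is used beyond functionality, and that attainment is exactly what fails in the non-total case) are accurate elaborations of the same argument.
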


We now consider an example of a multi-valued selection function which is upwards closed
but not witnessing.

\begin{example}
	Let $X=\{0,\star\}$. We think of $\star$ as $\epsilon$'s `favourite move'
	that they are happy to play in any context. Define a semilattice $R =
	\{\top,\bot_1,\bot_2\}$ where $\bot_1 \leq \top$, $\bot_2\leq \top$, and $\bot_1$
	and $\bot_2$ are not comparable. Define $\epsilon : \J_R^{\pf}(X)$ by 
	\[
		\epsilon(p) = \{\star\} \cup \{x\in X \:\:|\:\: px=\top\}.
	\]
	Suppose that $x\in \epsilon (\lambda x'.p_{x'}x')$ where $p_{x'}\in Ix'$ for some
	$I : X\rightarrow \pf(X\rightarrow R)$. Then either $x= \star$ or $p_x x = \top$.
	In either case, $x\in \epsilon \Big( \lambda x'. \bigvee_{p\in Ix'} px'\Big)$.
	Hence $\epsilon$ is upwards closed.

	Conversely, define an indexing function given by $I0=I\star = \{ c_{\bot_1},
	c_{\bot_2}\}$ (where the $c_i$ are constant functions as in the previous
	example). Then $0\in \epsilon \Big( \lambda x'. \bigvee_{p\in Ix'} px'\Big) = X$,
	but there is no choice function $p_- : X\rightarrow (X\rightarrow R)$ for $I$
	such that $0\in \epsilon (\lambda x'. p_{x'}x')$. Hence $\epsilon$ is not
	witnessing.
\end{example}

We now define a notion of rationality for nondeterministic games. A strategy profile is
\textit{rational} precisely when there is some plausible hypothesis about how later
players will behave under which that strategy profile is acceptable. As it sounds, this
notion is closely linked to the properties `witnessing' and `upwards closed.' We will
show that witnessing and upwards closed selection functions compute precisely the plays
of rational strategy profiles. We start by restricting ourselves to the two player case
where an arbitrary game is given by $(q:X\times Y\rightarrow R, \eps:
\J_R^{\pf}(X),\delta:\J_R^{\pf}(Y))$.

\begin{definition}[Rational strategy profile]
	Let $\epsilon : \J_R^{\pf}(X)$, $\delta : \J_R^{\pf}(Y)$, and $q: X\times
	Y\rightarrow R$. A strategy profile $(\sigma_1 : X, \sigma_2 : X\rightarrow Y)$ is
	\textit{rational} for $(q,\eps,\delta)$ if
	\begin{enumerate}
		\item There is a choice function $y(-) : X\rightarrow Y$ where for all
			$x\in X$ it holds that $y(x)\in \delta\big( q(x,-)\big)$ and
			\[
				\sigma_1\in\epsilon\big( \lambda x. q(x,y(x))\big);
			\]
		\item For all $x\in X$
			\[
				\sigma_2x\in\delta(q(x,-)).
			\]
	\end{enumerate}
	The set of \textit{rational plays} is 
	\[\rat(q,\epsilon,\delta) = \Big\{ (x,y)\in X\times Y \Bigm| (x,y) = (\sigma_1,\sigma_2\sigma_1) \text{
	for some rational } (\sigma_1,\sigma_2)\Big\}.\]
\end{definition}

Rational strategy profiles and witnessing/upwards closed selection functions are related
by the following theorem.

\begin{theorem}\label{BIGTHEOREM}
	Let $\epsilon : \J_R^{\pf}(X)$ be a multi-valued selection function. The following
	equivalences hold.
	\begin{enumerate}
		\item $\epsilon$ is witnessing if and only if for any $q:X\times
			Y\rightarrow R$ and $\delta : \J_R^{\pf}(Y)$ it holds that
			$(\epsilon\otimes\delta)(q)\subseteq \rat(q,\epsilon,\delta)$.
		\item $\epsilon$ is upwards closed if and only if for any $q:X\times
			Y\rightarrow R$ and $\delta :\J_R^{\pf}(Y)$ it holds that $\rat
			(q,\epsilon,\delta)\subseteq (\epsilon\otimes\delta)(q)$.
	\end{enumerate}
\end{theorem}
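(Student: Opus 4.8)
The plan is to set up a precise dictionary between the two ingredients of \emph{witnessing}/\emph{upwards closed} (an indexing function $I$ together with the choice functions it admits) and the two ingredients of a two-player game (the pair $(q,\delta)$ together with the computation of $\epsilon\otimes\delta$). First I would unfold both sides of each inclusion. Writing $f(x')=\delta(q(x',-))$, the product is $(\epsilon\otimes\delta)(q)=\{(x,y)\mid y\in f(x),\ x\in a\}$ where $a=\epsilon\big(\lambda x'.\bigvee\{q(x',y')\mid y'\in f(x')\}\big)$, whereas $(x,y)\in\rat(q,\epsilon,\delta)$ holds exactly when $y\in f(x)$ and there is a choice function $y(-)$ with $y(x')\in f(x')$ for all $x'$ and $x\in\epsilon\big(\lambda x'.q(x',y(x'))\big)$ (the other values of $\sigma_2$ are irrelevant to the play and may be filled in arbitrarily, using nonemptiness of each $f(x')$). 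Both descriptions share the clause $y\in f(x)$; the only difference is that the product collapses contexts with $\bigvee$ while rationality asks for a single choice of context per move. This is exactly the gap bridged by witnessing and upwards closure.

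For the two forward (``only if'') implications I would, given a game $(q,\delta)$, define the indexing function $I(x')=\{q(-,y')\mid y'\in f(x')\}\subseteq(X\to R)$, which is finite and nonempty because $f(x')$ is. By construction $\bigvee_{p\in I(x')}p(x')=\bigvee\{q(x',y')\mid y'\in f(x')\}$, so the collapsed context of $I$ is literally the map defining $a$. For part (1), if $(x,y)\in(\epsilon\otimes\delta)(q)$ then $x\in a$, and witnessing supplies a choice function $p_-$ for $I$ with $x\in\epsilon(\lambda x'.p_{x'}x')$; choosing $y(x')\in f(x')$ with $q(-,y(x'))=p_{x'}$ recovers the rationality condition, so $(x,y)\in\rat$. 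For part (2), a rational play supplies such a choice function and upwards closure pushes it back up to $x\in a$, giving $(x,y)\in(\epsilon\otimes\delta)(q)$.

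The converse (``if'') implications are the substantive part, since there I must encode an \emph{arbitrary} indexing function $I:X\to\pf(X\to R)$ as a game. The key construction is to take $Y=\prod_{x'\in X}I(x')$, the (finite, nonempty) set of all choice functions for $I$, to set $q(x',c)=c(x')(x')$, and to let $\delta$ be the constant selection function $\delta(k)=Y$. Then $f(x')=Y$ for every $x'$, and since $c(x')$ ranges over all of $I(x')$ as $c$ ranges over $Y$, we obtain $\{q(x',c)\mid c\in Y\}=\{p(x')\mid p\in I(x')\}$, so the map defining $a$ equals $\lambda x'.\bigvee_{p\in I(x')}p(x')$. For part (1): if $x\in\epsilon(\lambda x'.\bigvee_{p\in I(x')}p(x'))$ then $x\in a$, hence $(x,y_0)\in(\epsilon\otimes\delta)(q)$ for any $y_0\in Y$, and the inclusion into $\rat$ yields a rational profile whose choice function $y(-)$ satisfies $x\in\epsilon(\lambda x'.q(x',y(x')))$; setting $p_{x'}=y(x')(x')\in I(x')$ gives the required choice function for $I$. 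For part (2): given a choice function $p_-$ for $I$ with $x\in\epsilon(\lambda x'.p_{x'}x')$, I take the constant strategy $\sigma_2(x')=p_-$ (viewing $p_-\in Y$), which is rational because $q(x',p_-)=p_{x'}x'$, so the play $(x,p_-)$ lies in $\rat$, and $\rat\subseteq(\epsilon\otimes\delta)(q)$ forces $x\in a=\epsilon(\lambda x'.\bigvee_{p\in I(x')}p(x'))$, which is upwards closure.

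I expect the main obstacle to be exactly this encoding in the converse directions. The naive attempt of letting $Y$ index the functions of $I(x')$ directly fails, because distinct moves $x'$ may induce the same context map $\lambda p.\,p(x')$ yet demand different output sets, so $\delta$ cannot be read off $I$ pointwise. Taking $Y$ to be the set of \emph{global} choice functions and making $\delta$ constant sidesteps this by decoupling $\delta$ entirely from $I$ and loading all the combinatorial content onto $\epsilon$; the cost is the bookkeeping identity $\{c(x')(x')\mid c\in Y\}=\{p(x')\mid p\in I(x')\}$, which must be checked carefully. The remaining routine point is the non-injectivity of $y'\mapsto q(-,y')$ in the forward directions, handled simply by selecting a preimage.
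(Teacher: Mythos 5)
Your proof is correct, and it splits the same way the paper's does: the forward (``only if'') directions are essentially identical to the paper's definition chase, taking $I(x')=\{q(-,y')\mid y'\in\delta(q(x',-))\}$, transporting choice functions for $I$ to and from choice functions $y(-)$ valued in the second player's moves, and filling in the off-play values of $\sigma_2$ arbitrarily. The backward directions, however, use a genuinely different counterexample. The paper fixes $Y=X\to R$, takes $q$ to be function application $q(x,p)=px$, and encodes $I$ into the \emph{second player} via $\delta_I(q(x',-))=Ix'$; this forces a well-definedness check (the map $x\mapsto q(x,-)$ must be injective, which can fail when $|R|<2$, so the degenerate cases are treated separately) and leaves $\delta_I$ ``arbitrary'' off the image of that map. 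You instead take $Y=\prod_{x'\in X}I(x')$, the set of choice functions for $I$, define $q(x',c)=c(x')(x')$, and make $\delta$ totally indifferent, $\delta(k)=Y$, thereby loading all of the structure of $I$ into the outcome function and the move set rather than into $\delta$. What this buys: no well-definedness issue, no case split on $|R|$, and a fully explicit $\delta$; also you prove the converses directly rather than by contrapositive, which reads cleanly. The price is the bookkeeping identity $\{c(x')(x')\mid c\in Y\}=\{p(x')\mid p\in I(x')\}$ (fine, by idempotency of $\bigvee$ the joins then agree) together with one caveat worth flagging: $\delta(k)=Y$ is a legitimate element of $\J_R^{\pf}(Y)$ only because $Y$ is finite and nonempty, which uses finiteness of $X$ and not merely finiteness of each $I(x')$. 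This is harmless in the paper's setting of finite games, but note that the paper's construction avoids even this, since $\delta_I$ takes finite values over the possibly infinite set $X\to R$.
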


\begin{proof} 
  
	We first prove the forward directions of both equivalences which follow by
	definition chasing.

	\underline{1. $\Rightarrow:$} Suppose that $\epsilon$ is witnessing and
	$(x,y)\in(\epsilon\otimes\delta)(q)$. That is,
	\[
		x\in\epsilon\bigg(\lambda x'. \bigvee_{y'\in\delta(q(x',-))}q(x',y') \bigg)
	\]
	and
	\[
		y \in \delta \big( q(x,-)\big).
	\]
	We think of $\delta(q(-,-))$ as an indexing function $I : X\rightarrow \pf
	(X\rightarrow R)$ given by
	\[
		x' \mapsto \Big\{ q(-,y') \Bigm| y'\in\delta(q(x',-)) \Big\}.
	\]
	Then, as $\epsilon$ is witnessing, there is a choice function $y(-) : X\rightarrow
	Y$ where for all $x'$ we have that $y(x')\in\delta(q(x',-))$ such that
	\[
		x\in \epsilon \big( \lambda x'. q(x',y(x'))\big).
	\]
	Then a strategy profile $(\sigma_1,\sigma_2)$ where $\sigma_1 = x$ and 
	\[
		\sigma_2x' = \begin{cases}
				y &\text{if } x'=x
				\\
				y'\in\delta(q(x',-)) &\text{otherwise}
		\end{cases}
	\]
	is rational with play $(x,y)$.

	\underline{2. $\Rightarrow:$} Suppose that $\epsilon$ is upwards closed and
	$(\sigma_1,\sigma_2)$ is rational. Then for all $x\in X$ it holds that
	$\sigma_2x\in\delta\big(q(x,-)\big)$. In particular, $\sigma_2\sigma_1 \in
	\delta\big( q(\sigma_1,-)\big)$. Also there is $y(-) :X\rightarrow Y$ where, for
	all $x$ in $X$, $y(x)\in\delta\big(q(x,-)\big)$ and $\sigma_1\in\epsilon\big(\lambda
	x. q(x,y(x))\big)$. As $\epsilon$ is upwards closed,
	\[
		\sigma_1 \in \epsilon\bigg(\lambda x. \bigvee_{y\in\delta(q(x,-))}q(x,y)
		\bigg).
	\]
	Hence $(\sigma_1,\sigma_2\sigma_1)\in(\epsilon\otimes\delta)(q)$.

	For the backwards directions of the two equivalences we will construct a
	pathological counter example and prove the contrapositive. Define an outcome
	function $q: X\times (X\rightarrow R) \rightarrow R$ to be function application.
	That is, $q(x,p) = px$. Given an indexing function $I: X\rightarrow \pf
	(X\rightarrow R)$ define $\delta_I : \J_R^{\pf}(X\rightarrow R)$ by
	\[
		\delta_I(p) = \begin{cases}
			Ix' & p = q(x',-)
			\\
			\text{arbitrary} &\text{otherwise.}
		\end{cases}
	\]
	Note that $q(x,-)=q(x',-)$ if and only if $x=x'$ or $|R|<2$. In the latter case
	the theorem
	holds vacuously for $R=\varnothing$ and, for $|R|=1$, we have that $|X\rightarrow
	R| = 1$ and so $Ix = Ix'$ for all $x,x'\in X$. Consequently, $\delta_I$ is
	well-defined.

	\underline{1. $\Leftarrow:$} Suppose $\epsilon$ is not witnessing. Then there is
	some indexing function $I : X\rightarrow \pf (X\rightarrow R)$ and
	\[
		x\in \epsilon \bigg( \lambda x'.\bigvee_{p\in Ix'}px'\bigg)
	\]
	such that there is no choice function $p_- : X\rightarrow (X\rightarrow R)$ for
	$I$ such that $x\in\epsilon(\lambda x'. p_{x'}x' )$. By construction,
	\[
		\lambda x'. \bigvee_{p\in Ix'} px' = \lambda x'.
		\bigvee_{p\in\delta_I(q(x',-))} q(x,p)
	\]
	Then $(x,p)\in(\epsilon\otimes\delta_I)(p)$ for any $p\in\delta_I(q(x,-))$. By
	hypothesis there is no choice function $p_- : X\rightarrow (X\rightarrow R)$ for
	$I$ such that $x\in\epsilon(\lambda x'. p_{x'}x')$ and hence there are no rational
	strategy profiles with play $(x,p)$. Hence
	$(\epsilon\otimes\delta_I)(p)\not\subseteq \rat(q,\epsilon,\delta_I)$.

	\underline{2. $\Leftarrow:$} Suppose $\epsilon$ is not upwards closed. Then there
	is $I :X\rightarrow \pf(X\rightarrow R)$ and a choice function $p_- : X\rightarrow
	(X\rightarrow R)$ for $I$ such that $x\in\epsilon (\lambda x'. p_{x'}x')$ and
	\[
	x\not\in \epsilon\bigg( \lambda x'. \bigvee_{p\in Ix'} px'\bigg) =
	\epsilon\bigg(\lambda x'.\bigvee_{p\in\delta_I(q(x',-))} q(x',p)\bigg)
	\]
	Define $\sigma_2 : X\rightarrow (X\rightarrow R)$ by $\sigma_2 (x') = p_{x'}$.
	Then $(x,\sigma_2)$ is rational but
	$(x,\sigma_2x)\not\in(\epsilon\otimes\delta_I)(q)$.
\end{proof}

This theorem has an easy corollary regarding selection functions which are both witnessing
and upwards closed. 

\begin{corollary}
Suppose $\epsilon : \J_R^{\pf}(X)$ is witnessing and upwards closed. Then for all $q:
X\times Y\rightarrow R$ and all $\delta : \J_R^{\pf}(Y)$, $(\epsilon\otimes\delta)(q) =
\rat (q,\epsilon,\delta)$.
\end{corollary}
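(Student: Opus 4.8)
The plan is to read off the result directly from the two equivalences already established in Theorem~\ref{BIGTHEOREM}, since the hypotheses of the corollary are exactly the two conditions appearing on the left-hand sides of those equivalences. Concretely, I would fix an arbitrary $q : X\times Y\rightarrow R$ and $\delta : \J_R^{\pf}(Y)$ and then exhibit the desired equality $(\epsilon\otimes\delta)(q) = \rat(q,\epsilon,\delta)$ as the conjunction of two set inclusions, each supplied by one clause of the theorem.

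First I would invoke the forward direction of clause~1: because $\epsilon$ is witnessing by assumption, the theorem guarantees $(\epsilon\otimes\delta)(q)\subseteq \rat(q,\epsilon,\delta)$ for the chosen $q$ and $\delta$. Next I would invoke the forward direction of clause~2: because $\epsilon$ is upwards closed by assumption, the theorem guarantees the reverse inclusion $\rat(q,\epsilon,\delta)\subseteq (\epsilon\otimes\delta)(q)$. Combining the two inclusions yields $(\epsilon\otimes\delta)(q) = \rat(q,\epsilon,\delta)$, and since $q$ and $\delta$ were arbitrary the statement holds for all $q$ and $\delta$, as required.

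There is essentially no obstacle here, as the corollary is an immediate formal consequence of the biconditionals in Theorem~\ref{BIGTHEOREM}: mutual set inclusion is equality, and each inclusion is precisely the ``only if'' part of one equivalence applied under the relevant hypothesis. The only thing to be careful about is that the two clauses of the theorem are stated with universal quantification over $q$ and $\delta$, so no auxiliary choice or construction is needed to specialise them; one simply applies them to the ambient $q$ and $\delta$ of the corollary's statement. Thus the proof is a one-line deduction rather than a fresh argument.
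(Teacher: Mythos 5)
Your proof is correct and is exactly the argument the paper intends: the corollary is an immediate consequence of Theorem~\ref{BIGTHEOREM}, obtained by applying the forward direction of clause~1 (witnessing gives $(\epsilon\otimes\delta)(q)\subseteq \rat(q,\epsilon,\delta)$) and the forward direction of clause~2 (upwards closure gives the reverse inclusion), and combining the two inclusions into an equality. The paper treats this as ``an easy corollary'' and gives no separate proof, so your one-line deduction matches its approach precisely.
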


The property of being witnessing is not closed under the independent product of selection
functions. In section \ref{dominated-strategies} we will see an example where
$\epsilon$ and $\delta$ are both witnessing and upwards closed, but where
$(\epsilon\otimes\delta)$ is not witnessing. A heuristic for why witnessing fails is that
it might be possible to choose witnesses for $\epsilon$ and $\delta$, but be impossible to
choose such witnesses simultaneously. The property of being upwards closed \emph{is}
closed under the independent product of selection functions.

\begin{proposition}\label{proof3}
	Suppose $\eps : \J_R^{\pf}(X)$ and $\delta : \J_R^{\pf}(Y)$ are upwards closed.
	Then $(\eps\otimes\delta)$ is upwards closed.
\end{proposition}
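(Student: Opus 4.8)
The plan is to unfold the definition of the $\P$-product $\epsilon\otimes\delta$ and reduce upwards closure of the product to the assumed upwards closure of $\epsilon$ and of $\delta$, applied separately to the two coordinates. Recall that for a context $k : X\times Y\to R$ we have $(\epsilon\otimes\delta)(k) = \{(x,y)\mid x\in a_k,\ y\in f_k(x)\}$, where $f_k(x') = \delta(\lambda y'.k(x',y'))$ and $a_k = \epsilon(\lambda x'.\bigvee\{k(x',y')\mid y'\in f_k(x')\})$. Fix an indexing function $I : X\times Y\to\pf((X\times Y)\to R)$ and a choice function $P_-$ for it, and suppose $(x,y)\in(\epsilon\otimes\delta)(k^{c})$, where $k^{c} = \lambda(x',y').P_{(x',y')}(x',y')$ is the chosen context; write $k^{\vee} = \lambda(x',y').\bigvee_{P\in I(x',y')}P(x',y')$ for the collapsed context. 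I must show $(x,y)\in(\epsilon\otimes\delta)(k^{\vee})$, i.e.\ $x\in a_{k^\vee}$ and $y\in f_{k^\vee}(x)$. Throughout, all the finite powersets and all images of $\delta$ are nonempty, so every join below is over a nonempty finite set and is well defined.

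First I would handle the $\delta$-coordinate. For each fixed $x'$, partial application turns $I$ into an indexing function on $Y$, namely $J_{x'}(y') = \{P(x',-)\mid P\in I(x',y')\}\in\pf(Y\to R)$, and turns $P_-$ into a choice function $r^{x'}_{y'} = P_{(x',y')}(x',-)$ for $J_{x'}$. By construction $\lambda y'.r^{x'}_{y'}(y') = \lambda y'.k^{c}(x',y')$ and $\lambda y'.\bigvee_{r\in J_{x'}(y')}r(y') = \lambda y'.k^{\vee}(x',y')$, so applying $\delta$ to these yields exactly $f_{k^c}(x')$ and $f_{k^\vee}(x')$. Upwards closure of $\delta$ then gives $f_{k^c}(x')\subseteq f_{k^\vee}(x')$ for every $x'$; in particular, since $y\in f_{k^c}(x)$, we obtain the required $y\in f_{k^\vee}(x)$.

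The $\epsilon$-coordinate is the main obstacle. Writing $g^{c}(x') = \bigvee\{k^{c}(x',y')\mid y'\in f_{k^c}(x')\}$ and $g^{\vee}(x') = \bigvee\{k^{\vee}(x',y')\mid y'\in f_{k^\vee}(x')\}$, we have $x\in a_{k^c} = \epsilon(g^c)$ and we want $x\in a_{k^\vee} = \epsilon(g^\vee)$. The inclusion from the previous step shows that every term $P_{(x',y')}(x',y')$ appearing in $g^{c}(x')$ (indexed by $y'\in f_{k^c}(x')$, where $P_{(x',y')}\in I(x',y')$ and $y'\in f_{k^\vee}(x')$) also occurs among the terms of $g^{\vee}(x')$, so $g^{c}(x')\leq g^{\vee}(x')$ pointwise. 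To invoke upwards closure of $\epsilon$ I must realise $g^c$ as a diagonal-of-choice context and $g^\vee$ as the corresponding diagonal-of-join context for some indexing function $K : X\to\pf(X\to R)$. The trick is that upwards closure only constrains the values of functions on the diagonal, so I am free to build $K(x')$ from constant functions: let $K(x')$ consist of the constant function of value $g^{c}(x')$ together with one constant function of value $k^{\vee}(x',y')$ for each $y'\in f_{k^\vee}(x')$, and let $s_{x'}$ be the first of these. Then $s_{x'}(x') = g^{c}(x')$ gives $\lambda x'.s_{x'}(x') = g^{c}$, while $\bigvee_{s\in K(x')}s(x') = g^{c}(x')\vee g^{\vee}(x') = g^{\vee}(x')$ by the pointwise inequality, so $\lambda x'.\bigvee_{s\in K(x')}s(x') = g^{\vee}$. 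Upwards closure of $\epsilon$, applied to $K$ and the choice function $s_-$, then upgrades $x\in\epsilon(g^c)$ to $x\in\epsilon(g^\vee) = a_{k^\vee}$.

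Combining the two coordinates yields $(x,y)\in(\epsilon\otimes\delta)(k^{\vee})$, as required. I expect the genuinely delicate point to be the $\epsilon$-step: one has to notice both that the pointwise domination $g^{c}\leq g^{\vee}$ is available (which is precisely what the $\delta$-step buys us) and that upwards closure is insensitive to off-diagonal values, and these two facts together are what license the constant-function construction of $K$. The $\delta$-step, by contrast, is a routine partial-application repackaging of the hypotheses.
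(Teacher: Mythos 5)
Your proof is correct and follows essentially the same route as the paper's: unfold the product at the chosen and collapsed contexts, use upwards closure of $\delta$ (via partial application of $I$ at each fixed $x'$) to obtain $f_{k^c}(x')\subseteq f_{k^\vee}(x')$, which settles the second coordinate, and then use upwards closure of $\epsilon$ to settle the first. The only divergence is in that last step: the paper invokes upwards closure of $\epsilon$ \emph{twice} (first enlarging the join's index set from $f_{k^c}(x')$ to $f_{k^\vee}(x')$, then replacing each term $P_{(x',y')}(x',y')$ by $\bigvee_{P\in I(x',y')}P(x',y')$), leaving the intermediate indexing functions implicit, whereas your single application through the constant-function indexing function $K$ — justified by the pointwise inequality $g^c\leq g^\vee$ and the observation that upwards closure only sees diagonal values — is a legitimate and, if anything, more explicit realisation of the same step.
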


\section{Relation to Subgame Perfect Nash Equilibria}

A standard solution concept for games with sequential play is the
\textit{subgame perfect (Nash) equilibrium} (SPE). One may conjecture that
selection functions can be chosen such that their product computes the
\emph{set of all} plays of subgame perfect Nash equilibria. We show that this conjecture is false: It is impossible to compute the set of all SPE plays using the product of selection functions.

Subgame perfect Nash
equilibria are defined as follows for higher-order sequential games.

\begin{definition}
Consider a 2-player higher order sequential game with outcome function $q : X \times Y \to R$ and selection functions $\epsilon : \J^{\P}_R X$ and $\delta : \J^{\P}_R Y$.
A strategy profile $(\sigma_1,\sigma_2)$ is called a \emph{subgame perfect Nash equilibrium} if the following two conditions hold:
\begin{itemize}
\item $\sigma_1\in \epsilon (\lambda x . q(x,\sigma_2 (x)))$, and
\item $\sigma_2 (x) \in \delta(\lambda y . q(x,y))$ for all $x \in X$
\end{itemize}
	The set of \textit{subgame perfect plays} of $(q,\epsilon,\delta)$ is
	\[
		\sp (q,\epsilon,\delta) = \Big\{ (\sigma_1,\sigma_2\sigma_1)\in X\times Y
		\:\Bigm|\: (\sigma_1,\sigma_2) \text{ is subgame perfect}\Big\}.
	\]
\end{definition}

For comparison, the definition of an ordinary Nash equilibrium is obtained by weakening the second condition to only be required for $x = \sigma_1$.
In a Nash equilibrium the second player is only required to play optimally \emph{on the equilibrium path}, and a Nash equilibrium is subgame perfect if the second player additionally plays optimally if the first player deviates from equilibrium.

\begin{remark}
	Note that subgame perfect strategy profiles are rational: subgame perfect strategy
	profiles are those where $\eps$'s plausible hypothesis regarding $\delta$'s future
	behaviour is correct.
\end{remark}

We now show that the only selection functions which compute the set of subgame perfect
plays are those which are indifferent between all contexts, representing players who have
no preferences over the outcome of any game.

\begin{theorem}\label{proof4}
	Let $\epsilon : \J_R^{\pf}(X)$. If, for all sets $Y$, selection functions $\delta
	: \J_R^{\pf}(Y)$, and functions $q:X\times Y\rightarrow R$ it holds that
	$(\epsilon\otimes\delta)(q) = \sp (q,\epsilon,\delta)$, then $\epsilon$ is
	constant.
\end{theorem}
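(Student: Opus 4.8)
The plan is to derive a chain of successively stronger structural consequences of the hypothesis, ending in constancy. The main tool will be the ``function application'' game already used in the backwards direction of \autoref{BIGTHEOREM}: take $Y = X\to R$, set $q(x,p)=px$, and for a chosen indexing function $I:X\to\pf(X\to R)$ take $\delta=\delta_I$, so that $f(x)=\delta_I(q(x,-))=Ix$. For this game the product evaluates $\epsilon$ at the \emph{collapsed} context $\lambda x.\bigvee_{p\in Ix}px$, whereas a subgame perfect profile evaluates $\epsilon$ at a context $\lambda x.\,\sigma_2(x)(x)$ assembled from \emph{achieved} values $\sigma_2(x)(x)\in\{px\mid p\in Ix\}$, subject to the value at the first player's own move being pinned to the recorded play. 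The whole argument turns on playing ``$\epsilon$ sees the join'' (product) against ``$\epsilon$ sees an achievable tuple'' (SPE).

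First I would record what the hypothesis already forces. Since subgame perfect profiles are rational, $\sp\subseteq\rat$ always, so $(\epsilon\otimes\delta)(q)=\sp\subseteq\rat$ for all $\delta,q$, and \autoref{BIGTHEOREM}(1) makes $\epsilon$ witnessing. Dually, the profile built in the backwards direction of \autoref{BIGTHEOREM}(2) is in fact subgame perfect, so the inclusion $\sp\subseteq(\epsilon\otimes\delta)(q)$ for all $\delta,q$ is equivalent to $\epsilon$ being upwards closed; the hypothesised equality supplies this inclusion, so $\epsilon$ is upwards closed. By the corollary to \autoref{BIGTHEOREM} we then get $(\epsilon\otimes\delta)(q)=\rat(q,\epsilon,\delta)$, and combining with the hypothesis yields $\rat=\sp$ for every game.

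Next I would extract a pointwise independence statement from $\rat=\sp$. Writing $E_x=\{px\mid p\in Ix\}$, a rational play with first component $z$ exists exactly when some tuple $g\in\prod_x E_x$ satisfies $z\in\epsilon(g)$, while a subgame perfect play $(z,p_0)$ additionally requires such a $g$ with $g(z)=p_0 z$. Taking $E_x$ a singleton for $x\neq z$ and a two-element set on the diagonal coordinate $z$, the equality $\rat=\sp$ forces that whether $z\in\epsilon(k)$ does not depend on the value $k(z)$. Thus for every $z$, membership of $z$ in $\epsilon(k)$ is insensitive to its own coordinate.

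The main obstacle is to upgrade this to full independence: that $z\in\epsilon(k)$ is insensitive to $k(x_1)$ for \emph{every} coordinate $x_1$, whence $\epsilon(k)$ is the same set for all $k$ and $\epsilon$ is constant. This step cannot rest on own-coordinate independence and upward closure alone, since over a totally ordered $R$ those properties are consistent with a non-constant $\epsilon$ (e.g.\ one that accepts $z$ according to the value at a \emph{different} coordinate); so it is exactly here that the semilattice structure must enter, through witnessing. The idea is that if membership of $z$ genuinely depended on a coordinate $x_1\neq z$, one could choose two values whose join is \emph{not attained} in $R$ and assemble an $I$ whose collapsed context is accepted by $z$ (by upward closure) while every achievable tuple is rejected, so that witnessing fails, contradicting the second paragraph. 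Pinning down the offending pair of values and checking that no achievable tuple is accepted is the crux, and it is where incomparable elements of $R$ are indispensable. Once independence from all coordinates is established, every $\epsilon(k)$ coincides, so $\epsilon$ is constant.
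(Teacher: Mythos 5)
Your first three paragraphs are correct, and at two points they are actually more careful than the paper's own argument. Your derivation of upward closure --- observing that the rational profile constructed in the backwards direction of Theorem \ref{BIGTHEOREM}(2) is in fact subgame perfect, so that the inclusion $\sp\subseteq(\eps\otimes\delta)(q)$ alone forces upward closure --- is exactly the repair that the paper's one-line justification needs (as written, ``all subgame perfect plays are rational'' plus Theorem \ref{BIGTHEOREM} only yields witnessing). Likewise, your extraction of own-coordinate independence from $\rat=\sp$ is precisely what the function-application game proves. The paper instead takes arbitrary $p_1,p_2$ with $x\in\eps(p_1)$, $x\notin\eps(p_2)$ and asserts ``by construction'' that $(x,p_2)$ is not a subgame perfect play; that assertion holds only if the mixed context (equal to $p_1$ off $x$ and to $p_2$ at $x$) also rejects $x$, i.e.\ it silently assumes that membership of $x$ depends on the coordinate $x$ itself --- which is exactly the own-coordinate dependence your paragraph three isolates.

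The genuine gap is your final step, and it cannot be closed, because the implication you need is false. Your reduction shows the hypothesis implies witnessing, upward closure and own-coordinate independence; but this conjunction conversely implies the hypothesis (witnessing and upward closure give $(\eps\otimes\delta)(q)=\rat(q,\eps,\delta)$ by the corollary to Theorem \ref{BIGTHEOREM}, and own-coordinate independence lets one overwrite a rational witness $y(-)$ at the single coordinate $\sigma_1$ to obtain a subgame perfect profile with the same play, so $\rat=\sp$), and it does \emph{not} imply constancy. Take $X=\{0,1\}$, $R=\{\bot,\top\}$ with $\bot<\top$, and
\[
\eps(k)=\begin{cases}\{0,1\} & \text{if } k(1)=\top,\\ \{1\} & \text{otherwise.}\end{cases}
\]
This $\eps$ is not constant; it is witnessing because $R$ is total (Proposition \ref{proof1}); it is upwards closed because joins can only raise the value at coordinate $1$; and membership of each point ignores its own coordinate. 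Hence $(\eps\otimes\delta)(q)=\sp(q,\eps,\delta)$ for every $\delta$ and $q$, so this $\eps$ satisfies the full hypothesis of the theorem while being non-constant. Your own remark about totally ordered $R$ is this counterexample in embryo: you hoped witnessing, via incomparable elements of $R$, would rescue the upgrade to full independence, but the statement quantifies over all semilattices including total orders, where witnessing is automatic and there are no incomparable elements to exploit. So the crux you honestly flagged is not a missing lemma but an actual failure --- one that the paper's proof hides inside the unjustified ``by construction'' step; what your argument genuinely establishes is the correct characterisation, namely that $(\eps\otimes\delta)(q)=\sp(q,\eps,\delta)$ holds for all $\delta,q$ if and only if $\eps$ is witnessing, upwards closed and own-coordinate independent.
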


\begin{proof} 
  
	The proof proceeds by contradiction. Suppose that for all $\delta$ and $q$,
	$(\epsilon\otimes\delta)(q) = \sp (q,\epsilon,\delta)$, and that $\epsilon$ is not
	constant. That
	is, there exist $k_1,k_2 : X\rightarrow R$ and $x\in X$ such that $x\in \eps(p_1)$
	and $x\not\in \eps(p_2)$. Let $q: X\times (X\rightarrow R)\rightarrow R$ be the
	function application operator, $(x,p)\mapsto px$. Define $\delta :
	\J_R^{\pf}(X\rightarrow R)$ by
	\[
		\delta (p) = \begin{cases}
				\{p_1,p_2\} &p=q(x,-)
				\\
				\{p_1\} &\text{otherwise.}
		\end{cases}
	\]
	As $p_1\ne p_2$, we have that $|R|>1$. Consequently, $q(x,-) = q(x',-)$ if and
	only if $x=x'$. Moreover, $x'\ne x$ implies that $\delta( q(x',-)) = \{ p_1 \}$.
	Consider the play $(x,p_2)$ of $(q,\epsilon,\delta)$ noting that, by construction,
	$(x,p_2)$ is not the play of any subgame perfect strategy profile. Define $p_- :
	X\rightarrow (X\rightarrow R)$ to be the constant mapping $p_{x'}=p_1$ so that
	$x\in\epsilon(\lambda x'.\: p_{x'}x')=\epsilon(p_1)$.

	As all subgame perfect plays are rational, we have that $\epsilon$ is upwards
	closed by \ref{BIGTHEOREM}. Hence $(x,p_2) \in (\eps\otimes\delta)(q)$, but
	we have already established that $(x,p_2)$ is not a subgame perfect play.

\end{proof}

 This proof emphasizes the point that multi-valued selection functions
 fail to compute subgame perfect plays due to the possibility of \emph{indifference} in
 sequential games. In the case where $x$ is played, $\delta$ is indifferent between
 playing $p_1$ or $p_2$ whilst $\eps$ is not. In games where there is no such conflicting
 indifference, witnessing and upwards closed selection functions \emph{do} compute the set
 of subgame perfect plays.

\begin{definition}
	Let $\epsilon : \J_R^{\pf}(X)$, $\delta :\J_R^{\pf}(Y)$, and $q: X\times
	Y\rightarrow R$. We say that $(q,\epsilon,\delta)$ has \textit{coinciding
	indifference} if, for all $x\in X$ and $y,y'\in Y$,
	\[
		y,y'\in\delta\big(q(x,-)\big) \Longrightarrow \eps \big( q(-,y)\big) =
		\eps\big( q(-,y')\big)
	\]
\end{definition}

\begin{proposition}\label{proof5}
	Suppose $(q,\eps,\delta)$ has coinciding indifference and that $\eps$ is
	witnessing and upwards closed. Then $(\eps\otimes\delta)(q)=\sp (q,\eps,\delta)
	=\rat (q,\eps,\delta)$.
\end{proposition}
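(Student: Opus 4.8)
The plan is to split the chain of equalities into $(\eps\otimes\delta)(q)=\rat(q,\eps,\delta)$ and $\rat(q,\eps,\delta)=\sp(q,\eps,\delta)$, and to spend the coinciding indifference hypothesis only on the second. The first equality needs nothing new: as $\eps$ is witnessing and upwards closed, the corollary to Theorem~\ref{BIGTHEOREM} yields $(\eps\otimes\delta)(q)=\rat(q,\eps,\delta)$ directly. For the second, the inclusion $\sp(q,\eps,\delta)\subseteq\rat(q,\eps,\delta)$ is just the preceding remark: a subgame perfect profile is rational, with its plausible hypothesis being the true continuation. So the entire proposition collapses to the single inclusion $\rat(q,\eps,\delta)\subseteq\sp(q,\eps,\delta)$, which is where coinciding indifference has to be used.

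To establish $\rat\subseteq\sp$ I would take a rational profile $(\sigma_1,\sigma_2)$ and argue that it is \emph{already} subgame perfect, so that its play $(\sigma_1,\sigma_2\sigma_1)$ lies in $\sp(q,\eps,\delta)$. Rationality gives $\sigma_2 x\in\delta(q(x,-))$ for every $x$, which is exactly the second subgame perfect condition, so only the first condition $\sigma_1\in\eps(\lambda x.\,q(x,\sigma_2 x))$ remains to be checked. What rationality supplies instead is a possibly different choice function $y(-)$ with $y(x)\in\delta(q(x,-))$ for all $x$ and $\sigma_1\in\eps(\lambda x.\,q(x,y(x)))$. The crux is therefore to transport membership of $\sigma_1$ from the hypothesised diagonal $\lambda x.\,q(x,y(x))$ to the realised diagonal $\lambda x.\,q(x,\sigma_2 x)$, noting that at every coordinate $x$ both $y(x)$ and $\sigma_2 x$ are optimal responses lying in $\delta(q(x,-))$.

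Because $X$ is finite, my approach would be to interpolate between $y(-)$ and $\sigma_2$ one coordinate at a time: fix a coordinate $x_0$, replace $y(x_0)$ by $\sigma_2 x_0$ while keeping all other coordinates fixed, and show that $\sigma_1$ stays in the $\eps$-image of the resulting diagonal. The two diagonals compared in a single such step agree off $x_0$ and differ at $x_0$ only by switching between $q(x_0,y(x_0))$ and $q(x_0,\sigma_2 x_0)$, with both $y(x_0),\sigma_2 x_0\in\delta(q(x_0,-))$; this is precisely the situation coinciding indifference is meant to neutralise, asserting that $\eps$ cannot tell two optimal responses of $\delta$ apart. Iterating over the finitely many coordinates of $X$ would then carry $\sigma_1$ all the way from the hypothesised to the realised diagonal and finish the inclusion.

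The step I expect to be the main obstacle is exactly this single-coordinate swap, because coinciding indifference is stated for the global functions $q(-,y)$ and $q(-,y')$, in which the second argument is held fixed across all of $X$, whereas the diagonals I must compare hold the \emph{other} coordinates fixed to the context shared by $y(-)$ and $\sigma_2$. Closing this gap is where upwards closure and witnessing should re-enter: I would raise the diagonal at the coordinate $x_0$ up to the join $\bigvee\{q(x_0,y'):y'\in\delta(q(x_0,-))\}$ using upwards closure, then use witnessing to descend again to a single optimal response, with coinciding indifference ensuring the descent can be aimed at $\sigma_2 x_0$ rather than at an arbitrary witness. Correctly combining these three properties into one coordinate-wise step, and checking that the interpolation never leaves the class of $\delta$-admissible diagonals, is the delicate part of the argument.
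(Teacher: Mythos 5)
Your skeleton is exactly the paper's: the corollary to Theorem~\ref{BIGTHEOREM} gives $(\eps\otimes\delta)(q)=\rat(q,\eps,\delta)$, the remark that subgame perfect profiles are rational gives $\sp(q,\eps,\delta)\subseteq\rat(q,\eps,\delta)$, and everything reduces to showing that a rational profile $(\sigma_1,\sigma_2)$ already satisfies $\sigma_1\in\eps(\lambda x.\, q(x,\sigma_2 x))$, i.e.\ $\rat\subseteq\sp$. The paper performs this last transport in a single line (``by coinciding indifference''); you correctly identify it as the delicate step, but the interpolation you propose for it cannot be made to work. Witnessing is purely existential: it yields \emph{some} choice function and gives you no handle with which to ``aim'' the witness at $\sigma_2 x_0$. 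And coinciding indifference, as literally stated, only equates $\eps\big(q(-,y)\big)$ with $\eps\big(q(-,y')\big)$ for contexts whose second coordinate is held constant across all of $X$; it imposes no constraint on the mixed diagonals that your one-coordinate swaps pass through.

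The swap genuinely fails. Let $X=\{a,b\}$, $Y=\{0,1\}$, $R=\pf(\R)$ with union as join, let $\eps$ be the strict dominance selection function (witnessing and upwards closed by Proposition~\ref{proof7}), let $\delta(p)=\{0,1\}$ be constant, and set
\[
q(a,0)=\{0\},\qquad q(a,1)=\{3\},\qquad q(b,0)=\{1\},\qquad q(b,1)=\{4\}.
\]
Coinciding indifference holds, since $\eps\big(q(-,0)\big)=\eps\big(q(-,1)\big)=\{b\}$. The profile $\sigma_1=a$, $\sigma_2=(a\mapsto 0,\ b\mapsto 1)$ is rational via the hypothesis $y(-)=(a\mapsto 1,\ b\mapsto 0)$ (condition 2 is trivial because $\delta$ is constant), since $\eps$ applied to the diagonal $(\{3\},\{1\})$ is $\{a\}$; yet every profile with play $(a,0)$ has diagonal $(\{0\},\{1\})$ or $(\{0\},\{4\})$, on which $\eps$ gives $\{b\}$. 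Hence $(a,0)\in\rat(q,\eps,\delta)\setminus\sp(q,\eps,\delta)$. So under the literal reading of coinciding indifference the inclusion $\rat\subseteq\sp$ is simply false, and no combination of witnessing, upwards closure, and interpolation --- yours or the paper's one-liner --- can establish it. The transport is sound only under the stronger, pointwise reading in which $y,y'\in\delta\big(q(x,-)\big)$ forces equality of the contexts themselves, $q(-,y)=q(-,y')$: then $q(x',y(x'))=q(x',\sigma_2 x')$ for every $x'$, the two diagonals coincide as functions, and the step is immediate with no auxiliary machinery at all. Your instinct about where the difficulty lies was exactly right; the resolution is not more selection-function manipulation but a strengthening (or reinterpretation) of the coinciding indifference hypothesis.
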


To summarize, in a two round game with first player $\eps$, $(\eps\otimes\delta)(q)$ computes subgame
perfect plays for arbitrary second player $\delta$ and arbitrary outcome function $q$ if
and only if
$\eps$ is constant. $(\eps\otimes\delta)(q)$ \emph{does} compute subgame perfect plays in
the special cases where $\eps$ is upwards closed and witnessing, and $(q,\eps,\delta)$ has
coinciding indifference.

\section{Relation to strictly dominated strategies}\label{dominated-strategies}

In this section we extend the previous results to games of
arbitrary length. 
Note that in this paper we will only consider games whose length is precisely $n$ (i.e. all plays have length $n$), which includes via an encoding games whose length is bounded by $n$.
We do not consider unbounded games, i.e. games whose plays are all finite but which have arbitrarily long plays, which would introduce significant complications.

\begin{notation}
Given $A \subseteq
\bigcup_{i=1}^n X_i$, we use $A^{(j)}$ to denote $X_j \cap A$. 
\end{notation}

In particular, if $\Gamma$ is a set of strategies for
some sequential game, then $\Gamma^{(j)}$ denotes the strategies in
$\Gamma$ which are strategies at round $j$.

In the 2 player case, if a strategy profile $(\sigma_1,\sigma_2)$ is rational for
$(q,\epsilon,\delta)$, there is some choice function $y(-) :X\rightarrow Y$ such that
$\epsilon$ makes an acceptable play if $\delta$ plays according to $y(-)$. Equivalently,
$\sigma_1$ is rational if there is \emph{some} rational $\sigma_2$ for $\delta$ under
which $\sigma_1$ is a good move. To generalise to the $n$-round case, we can simply extend
this heuristic. Given a game $\big( q, (\epsilon_i : \J_R^{\pf}(X_i))_{i=1}^n\big)$, a
strategy $\sigma_1 : X_1$ is rational if there are strategies $\sigma_2,\cdots,\sigma_n$,
rational for $\epsilon_2,\cdots,\epsilon_n$ respectively, under which $\sigma_1$ is a good
move. For players $\eps_i$ acting in the `mid-game,'  a strategy is rational if it is
rational for all subgames which are given by partial plays $x\in\prod_{j=1}^{i-1}X_j$.

We will define a more general notion of sets of strategies as \textit{consistent} for a
game $\G$. The set of rational strategy profiles will then be realised as the maximal
consistent set of strategy profiles.

\begin{definition}
Let $\Gamma$ be a set of strategies for a sequential game $\G$.
$\Gamma$ is \emph{$\G$-consistent} if
for all $i<n$ and $\sigma_i\in\Gamma^{(i)}$, and all partial plays
$x\in\prod_{j=1}^{i-1} X_j$, there exists $\sigma =
(\sigma_{i+1},\cdots,\sigma_n)$ where  $\sigma_{i+1},\cdots, \sigma_n \in \Gamma$ such that
\[
\sigma_i (x) \in \eps_i \bigg(\lambda y. q\big(
(x,y)^{\sigma}  \big) \bigg)
\]
where $(x,y)^{\sigma}$ is the strategic extension of
$(x_1,\cdots, x_{i-1},y)$ by $(\sigma_{i+1},\cdots,\sigma_n)$.
\end{definition}

Note that if $\Gamma$ is $\G$-consistent, the
$\G$-consistency of $\Gamma \cup \{\sigma_i\}$ depends only on
$\Gamma^{(j)}$ for $j>i$. With that in mind, we can define the maximal
$\G$-consistent set of strategies, denoted by
$\Sigma(\G)$, as follows.

\begin{definition}\label{sigmaG}
$\Sigma(\G)$ is given by
\begin{align*}
\Sigma(\G)^{(n)} &= \Big\{ \sigma_n\in\Sigma^{(n)} : \forall
x\in\prod_{i=1}^{n-1} X_i. \: \sigma_n
(x) \in \epsilon_n \big( q(x,-)\big)\Big\}
\\
\Sigma(\G)^{(i)} &= \Big\{ \sigma_i\in\Sigma^{(i)} : \{\sigma_i\} \cup
\bigcup_{j>i} \Sigma(\G)^{(j)} \mbox{ is }
\G\mbox{-consistent} \Big\}.
\end{align*}
\end{definition}

\begin{definition}
Let $\Gamma$ be a set of strategies for a sequential game $\G$. A play
$x\in \prod_{i=1}^n X_i$ is a \emph{$\Gamma$ play} if $x$ is the
strategic play of a strategy profile $\sigma$ where $\sigma_i \in
\Gamma^{(i)}$ for each $i\leq n$.
\end{definition}
The following lemma and theorem provide a generalisation of \ref{BIGTHEOREM} to the
$n$-round
case.

\begin{lemma}\label{mOutcome}
	Let $q: X \rightarrow R$ and define $q' : X\times Y \rightarrow R$ by $q'(x,y) =
	q(x)$. Then, for all $\eps : \J_R^{\pf}(X)$ and $\delta : \J_R^{\pf}(Y)$,
	\[
		x \in \eps(q) \Leftrightarrow \exists y\in Y \text{ such that }
		(x,y)\in(\eps\otimes\delta)(q').
	\]
\end{lemma}

\begin{proof}
	If $x\in\eps(q)$ then, for all $y\in Y$,
	\[
		x\in \eps(q) = \eps\Big( \lambda x'. \bigvee_{y\in\delta(q(x'))}q(x')\Big)
		= \eps\Big(\lambda x'. \bigvee_{y\in\delta(q'(x',-))}q'(x',y)\Big).
	\]
	Hence if $y\in \delta\big(q(x,-)\big)$ then $(x,y)\in(\eps\otimes\delta)(q')$.

	Conversely,
	\[
		(x,y)\in (\eps\otimes\delta)(q')\Rightarrow x\in\eps\Big(\lambda x'.
		\bigvee_{y'\in\delta(q'(x',-))}q'(x',y') \Big) = \eps(q)
	\]
\end{proof}

\begin{corollary}\label{muteOutcome}
	Let $\Big( q, \big( \eps_i \big)_{i=1}^n \Big)$ be a sequential game. Suppose there
	exists $j<n$ and $q_j : X_i\times X_n \rightarrow R$ such that, for all
	$x\in\prod_{i=1}^n X_i$, $qx = q_j(x_j,x_n)$.
	Then $(x_j,x_n)\in(\eps_j\otimes\eps_n)(q_j)$ if and only if there exist
	$x_1,\cdots,x_{j-1},x_{j+1},\cdots,x_{n-1}$ with each $x_j\in X_j$ such that
	$(x_1,\cdots,x_n)\in \bigg(\bigotimes_{i=1}^n\eps_i\bigg)(q)$.
\end{corollary}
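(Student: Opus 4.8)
The plan is to obtain the corollary by repeatedly eliminating the ``mute'' players, i.e.\ the indices in $\{1,\dots,n\}\setminus\{j,n\}$ whose choices leave the outcome $q=q_j(x_j,x_n)$ unchanged. Lemma~\ref{mOutcome} is exactly the elimination of a single mute player sitting in the \emph{last} position of a two-player product; here the surviving relevant players are $j$ and $n$, with $n$ always last, so what I actually need is to eliminate mute players occurring \emph{before} and \emph{strictly between} the two relevant ones. The only structural fact about the product I will use is that it peels off its first factor, $\bigotimes_i\eps_i=\eps_1\otimes\bigotimes_{i\geq2}\eps_i$ (and likewise for the product over any increasing sub-sequence of players), so no general reassociation of $\bigotimes$ is required.

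The key step is a single lemma generalising Lemma~\ref{mOutcome}: for $\ell:\J_R^{\pf}(L)$, $\mu:\J_R^{\pf}(M)$, $\rho:\J_R^{\pf}(N)$ and an outcome $Q:L\times M\times N\to R$ with $Q(l,m,n)=\hat Q(l,n)$ independent of $m$, one has $(l,n)\in(\ell\otimes\rho)(\hat Q)$ if and only if there is some $m\in M$ with $(l,m,n)\in\bigl(\ell\otimes(\mu\otimes\rho)\bigr)(Q)$. I would prove this by unwinding the $\pf$-product formula on the right-nested grouping. Setting $g(l)=\rho(\hat Q(l,-))$ and $c(l)=\bigvee\{\hat Q(l,n)\mid n\in g(l)\}$, a direct computation gives, for each fixed $l$, that $(\mu\otimes\rho)$ evaluated at the context $\lambda(m,n).\,\hat Q(l,n)$ returns $\{(m,n)\mid m\in\mu(\lambda m.\,c(l)),\ n\in g(l)\}$, and that the value the outer $\ell$ reads off at $l$ is again $c(l)$; hence $\bigl(\ell\otimes(\mu\otimes\rho)\bigr)(Q)=\{(l,m,n)\mid l\in\ell(\lambda l.\,c(l)),\ m\in\mu(\lambda m.\,c(l)),\ n\in g(l)\}$, whose projection to $(l,n)$ is precisely $(\ell\otimes\rho)(\hat Q)$ computed from the same formula. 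Taking $L$ to be a one-point set recovers elimination of a mute \emph{first} player, while a one-point $N$ recovers Lemma~\ref{mOutcome}; the general statement covers a mute \emph{middle} player.

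The corollary then follows by induction on the number of mute players, always peeling the least-indexed one. If it is the first remaining player (this is the case for every mute index below $j$) I apply the lemma with $L$ a one-point set, $\mu$ that player's selection function, and $\rho$ the right-unfolded tail, which strips it and introduces an existential over its choice set. Once all indices below $j$ are gone, $\eps_j$ is first and every remaining mute index $j+1,j+2,\dots$ sits immediately after it, so I apply the lemma with $\ell:=\eps_j$, $\mu$ the mute player, and $\rho$ the tail; each application leaves a genuine product over the surviving sub-sequence with outcome still equal to $q_j$, so the induction hypothesis applies. Composing all the existentials produced yields exactly the quantification over $x_1,\dots,x_{j-1},x_{j+1},\dots,x_{n-1}$ in the statement, with $(x_j,x_n)$ ranging over $(\eps_j\otimes\eps_n)(q_j)$.

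I expect the only real obstacle to be the middle case of the lemma: one must check that inserting the mute coordinate $m$ changes neither the join $\bigvee\{\hat Q(l,n)\mid n\in g(l)\}$ that $\ell$ evaluates nor $\rho$'s responses. This holds because $\hat Q$ is constant in $m$ while $\mu(\lambda m.\,c(l))$ is \emph{nonempty}, so the set of attained outcome values, and hence its join, is unaffected; this is the one place where working with the \emph{nonempty} finite powerset $\pf$ is essential, since an empty response for a mute player would simultaneously remove the existential witness and alter the join.
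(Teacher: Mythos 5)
Your proof is correct, and it reaches the corollary by a genuinely different decomposition than the paper's. The paper inducts on $n$ with a case split on $j$: when $j\neq 1$ the first player is mute, so (using nonemptiness of $\pf$) one picks any $x_1$ in $\eps_1$ applied to the collapsed, constant context and applies the inductive hypothesis to the subgame $\big(q(x_1,-),(\eps_i)_{i=2}^n\big)$; when $j=1$ it observes that $\lambda(y_1,\ldots,y_{n-1}).\,\bigvee_{y_n\in\eps_n(q(y_1,\ldots,y_{n-1},-))}q(y_1,\ldots,y_n)$ is mute in every variable except $y_1$, and inserts the \emph{whole} mute block in a single application of Lemma~\ref{mOutcome}, taking $\delta:=\bigotimes_{i=2}^{n-1}\eps_i$ as one selection function, before refolding to the full product. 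You instead prove one new three-factor lemma that removes a mute player sitting \emph{between} $\ell$ and $\rho$ in the right-nested product $\ell\otimes(\mu\otimes\rho)$, and induct on the number of mute players, stripping them one at a time; Lemma~\ref{mOutcome} and the mute-first-player step are the degenerate one-point instances of your lemma. Your route buys several things: a uniform argument with no case split; the explicit $\pf$-product computation that the paper's $j=1$ chain of equivalences leaves implicit; a correctly localized use of nonemptiness (your closing remark that an empty response would both destroy the witness and change the join is exactly the crucial point, and it is also what licenses the paper's choice of $x_1$ in its $j\neq 1$ case); and freedom from re-association of $\otimes$ --- the paper's final refolding step tacitly identifies $\big(\bigotimes_{i=1}^{n-1}\eps_i\big)\otimes\eps_n$ with $\bigotimes_{i=1}^{n}\eps_i$, an associativity fact it never establishes, whereas you only ever peel off first factors. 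What the paper's route buys is brevity: by reusing the already-proved two-factor lemma and treating blocks of players as single players, it dispatches each of its two cases in a few lines.
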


\begin{theorem}\label{nRound}
	Let $\eps_i : \J_R^{\pf}(X_i)$ for $i<n$. For all sets $X_n$, selection functions
	$\eps_n:\J^{\pf}_R(X_n)$, and outcome functions $q: \prod_{i=1}^n X_i \rightarrow
	R$, the following equivalences hold.
\begin{enumerate}
\item  $\epsilon_i$ is witnessing for each $i<n$ if and only if
$\bigg( \bigotimes_{i=1}^n \epsilon_i \bigg) (q)$ is a subset of the set of
$\Sigma(\G)$ plays.
\item   $\epsilon_i$ is upwards closed for each $i<n$ if and only if $\bigg(
\bigotimes_{i=1}^n \epsilon_i \bigg)(q)$ is a superset of the set of
$\Sigma(\G)$ plays.
\end{enumerate}
\end{theorem}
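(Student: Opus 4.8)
The plan is to induct on $n$, using associativity of the monoidal product to peel off the first player so that each step reduces to the two-round Theorem~\ref{BIGTHEOREM}. For the base case $n=2$ one first checks that the $\Sigma(\G)$ plays coincide with $\rat(q,\eps_1,\eps_2)$: unwinding Definition~\ref{sigmaG}, $\Sigma(\G)^{(2)}$ is exactly the set of $\sigma_2$ with $\sigma_2 x\in\eps_2(q(x,-))$ for all $x$, and $\G$-consistency of $\{\sigma_1\}\cup\Sigma(\G)^{(2)}$ for the empty partial play asserts precisely the existence of a witness $y(-)$ with $y(x)\in\eps_2(q(x,-))$ and $\sigma_1\in\eps_1(\lambda x.\,q(x,y(x)))$. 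These are conditions 1 and 2 of a rational strategy profile, so the base case is Theorem~\ref{BIGTHEOREM} verbatim.

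For the inductive step, write $\delta = \bigotimes_{i=2}^n\eps_i : \J_R^{\pf}\big(\prod_{i=2}^n X_i\big)$, so that by associativity of the product $\bigotimes_{i=1}^n\eps_i = \eps_1\otimes\delta$. For each first move $x_1$ let $\G_{x_1}$ be the $(n-1)$-round subgame with players $\eps_2,\dots,\eps_n$ and outcome $q_{x_1}=q(x_1,-)$; then $\delta(q(x_1,-)) = \big(\bigotimes_{i=2}^n\eps_i\big)(q_{x_1})$, and the induction hypothesis characterises this set as a subset (resp.\ superset) of the $\Sigma(\G_{x_1})$ plays when $\eps_2,\dots,\eps_{n-1}$ are witnessing (resp.\ upwards closed); note the index range matches the hypothesis exactly, with $\eps_n$ left arbitrary at every level. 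The crux is a decomposition lemma for consistency: because the witnesses in the definition of $\G$-consistency may be chosen independently for each partial play, a strategy $\sigma_j\in\Sigma(\G)^{(j)}$ decomposes per first move into the restricted strategies $\sigma_j|_{x_1}\in\Sigma(\G_{x_1})^{(j)}$, and conversely such restrictions glue back together. I would use this to show that $(x_1,z)$ is a $\Sigma(\G)$ play if and only if $z$ is a $\Sigma(\G_{x_1})$ play together with a consistency witness for $\sigma_1$ of the form $\lambda x.\,q(x,y(x))$, where $y(x)$ is the subgame play induced by the witnessing profile.

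With these pieces in hand the forward directions follow. For part~1, if $\eps_1$ is witnessing then Theorem~\ref{BIGTHEOREM}.1 gives $(\eps_1\otimes\delta)(q)\subseteq\rat(q,\eps_1,\delta)$, so any play yields $y(x),\sigma_2(x)\in\delta(q_x)$; the induction hypothesis turns these into $\Sigma(\G_x)$ plays, and the decomposition lemma glues them into $\Sigma(\G)$ strategies witnessing $\sigma_1$'s consistency. The elegant point is that we never need $\delta$ itself to be witnessing (which indeed fails in general) — its behaviour is supplied entirely by the induction hypothesis. For part~2, $\delta$ \emph{is} upwards closed by Proposition~\ref{proof3}; given a $\Sigma(\G)$ play, the consistency witness for $\sigma_1$ produces a $y(-)$ with each $y(x)$ a $\Sigma(\G_x)$ play, hence $y(x)\in\delta(q_x)$ by the induction hypothesis, and Theorem~\ref{BIGTHEOREM}.2 places the play in $(\eps_1\otimes\delta)(q)$. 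The backward directions I would prove by contraposition: if some $\eps_i$ ($i<n$) fails to be witnessing (resp.\ upwards closed), pick a two-round counterexample from Theorem~\ref{BIGTHEOREM} and embed it by taking $X_n$ and $\eps_n$ as its second player and an outcome $q$ that depends only on coordinates $i$ and $n$; Corollary~\ref{muteOutcome} (via Lemma~\ref{mOutcome}) then identifies the relevant projection of $\big(\bigotimes_{i=1}^n\eps_i\big)(q)$ with $(\eps_i\otimes\eps_n)(q_i)$, while the indifference of the muted players makes their consistency automatic, so the $\Sigma(\G)$ plays project onto $\rat(q_i,\eps_i,\eps_n)$ and the two-round failure lifts.

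The hard part will be the decomposition lemma and the gluing it supports: the strategies in $\Sigma(\G)^{(j)}$ observe the full history and must be simultaneously consistent across \emph{all} partial plays, whereas the rational witness $y(-)$ and the subgame sets $\Sigma(\G_{x_1})$ treat each first move separately. Reconciling these — in particular verifying that independently chosen subgame witnesses can be assembled into genuine full-game strategies without destroying consistency at earlier rounds, and that the muted players contribute no obstruction in the backward construction — is where the real bookkeeping lies; the applications of Theorem~\ref{BIGTHEOREM}, Proposition~\ref{proof3}, and Corollary~\ref{muteOutcome} are then essentially mechanical.
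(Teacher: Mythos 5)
Your proposal follows essentially the same route as the paper's proof: induction on $n$ peeling off the first player, Theorem~\ref{BIGTHEOREM}.1 plus the induction hypothesis plus an explicit gluing of the subgame-consistent profiles into full-game strategies for the forward direction of part~1, rationality extracted from the consistency witness together with upwards closure of $\eps_1$ for part~2 (the paper inlines this step rather than citing Theorem~\ref{BIGTHEOREM}.2, but the content is identical), and for the backward directions the pathological $\delta_I$ from Theorem~\ref{BIGTHEOREM} paired with an outcome function depending only on coordinates $i$ and $n$, handled via Corollary~\ref{muteOutcome}. The one blemish is your claim that $\delta=\bigotimes_{i=2}^n\eps_i$ is upwards closed by Proposition~\ref{proof3}: this is both unjustified ($\eps_n$ is arbitrary, and Proposition~\ref{proof3} requires all factors to be upwards closed) and unnecessary, since the forward direction of Theorem~\ref{BIGTHEOREM}.2 needs only that $\eps_1$ be upwards closed, with $\delta$ universally quantified, so deleting that remark leaves a correct argument.
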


\begin{proof}
	We prove the forward directions of the two equivalences first. The proof proceeds
	by induction on $n$, noting that the cases $n=1$ are trivial.

	$(1):$ Suppose \[x=(x_1,\cdots,x_n)\in\bigg(\bigotimes_{i=1}^n \epsilon_i \bigg)
	(q)\]. As
$\epsilon_1$ is witnessing, it is the play of some rational
strategy profile $(x_1,
f:X_1\rightarrow \prod_{i=2}^n X_i)$ of the two round game $\big( (X_1,
\prod_{i=2}^n X_i), (\eps_1, \bigotimes_{i=2}^n \eps_i), q
\big)$. By hypothesis we have that 
\[ \bigg(
\bigotimes_{i=2}^n \epsilon_i \bigg)(q(y_1,-)) \]
 is a subset of the set of
$\Sigma(\G^{y_1})$ plays for all $y_1\in X_1$ where 
\[\G^{y_1}= \big(
(X_i)_{i=2}^n, (\eps_i)_{i=2}^n, q(y_1,-)\big)\]
 Hence
$f(y_1)$ is the play of some $\Sigma(\G^{y_1})$-consistent
strategy profile $\sigma^{y_1}$ for all $y_1\in X_1$. Then the
strategy profile $\tau$ for $\G$ given by
\begin{align*}
\tau_1 &= x_1
\\
\tau_{i+1}(y_1,\cdots,y_i) &= \sigma^{y_1}_{i+1}(y_2,\cdots,y_i)
\end{align*}
is such that $\tau_i\in\Sigma(\G)$ for all $i$ and the play of $\tau$
is $x$.

$(2):$ Suppose that $x=(x_1,\cdots,x_n)$ is the $\Sigma(\G)$ play of
$(\sigma_1,\cdots,\sigma_n)$. A simple check demonstrates that for all $y_1\in X_1$, we
have that $\sigma_2,\cdots,\sigma_n\in\Sigma(\G^{y_1})$. By hypothesis, the strategic play
$y_1^{\sigma_{-1}}$ of $(\sigma_2,\cdots,\sigma_n)$ for the game $\G^{y_1}$ is such that
\[
	y_1^{\sigma_{-1}} \in \bigg( \bigotimes_{i=2}^n\eps_i\bigg)
\big(q(y_1,-)\big).
\]
In particular, 
\begin{equation}\tag{$\star$}
	(x_2,\cdots,x_n)\in \bigg(\bigotimes_{i=2}^n\eps_i\bigg)\big( q(x_1,-)\big).
\end{equation}

As $x_1=\sigma_1\in\Sigma(\G)$ there exists $\tau = (\tau_2,\cdots,\tau_n)$ with each
$\tau_i\in\Sigma(\G)$ such that 
\[
	x_1 \in \eps_1 \big( \lambda y_1. q(y_1^{\tau})\big)
\]
and, for all $y_1\in X_1$,
\[
	(y_1^\tau)_{-1}\in \bigg(\bigotimes_{i=2}^n\eps_i\bigg) \big(q(y_1,-)\big).
\]
As $\eps_1$ is upwards closed,
\[
	x_1\in \eps_1\bigg( \lambda (y_1). \bigvee_{z\in Ay_1} q(x_1,z)\bigg)
\]
where $A(y_1) = \Big( \bigotimes_{i=2}^n \eps_i \Big)\big(q(y_1,-)\big)$. From this and
$(\star)$ we conclude that
\[
	x\in\bigg(\bigotimes_{i=1}^n\eps_i\bigg)(q).
\]

As for the backward directions, for $i<n$ consider the construction
$\delta^i_I:\J_R^{\pf}(X_i)$ as in the proof of
\ref{BIGTHEOREM} and let $q_i : \Big(\prod_{j=1}^{n-1} X_j\Big)\times (X_i\rightarrow
R)\rightarrow R$ be given by $(x,p)\mapsto px_i$. The converse directions are then a
corollary of \ref{muteOutcome} and \ref{BIGTHEOREM} by considering the game $(q,
(\eps_1,\cdots,\eps_{n-1},\delta_I))$ for each $i$.

\end{proof}

We have seen that nondeterministic selection functions do not, in
general, describe subgame perfect Nash equilibria. We have also given
a technical characterisation of the plays nondeterministic selection
functions \emph{do} describe. In this section we make sense of this
technical characterisation, relating it to a solution concept that is
already well-known.

\begin{definition}
	Let $S,T\subseteq \R$. $S$ \textit{strictly dominates} $T$ if $\min (S) > \max
	(T)$. We write $S \succ_s T$.
\end{definition}

We now define the \textit{strict dominance selection functions} to be those that return
the set of choices that are not mapped to strictly dominated subsets of the reals for a
given context.

\begin{definition}
	Let $R$ be $\pf (\R^n)$ where the semilattice join is given by union
	(equivalently, the order structure is given by inclusion). Given $p :
	X_i\rightarrow \pf(\R^n)$, define $p^i : X_i\rightarrow\pf(\R)$ to be
	$(\pf\pi_i)\circ p$. Define the
	$i^{\text{th}}$ \textit{strict dominance} selection function, $\eps_i^s :
	\J_R^{\pf}(X_i)$ by
	\[
		\eps_i^s(p: X_i\rightarrow \pf(\R^n)) = \bigg\{ x_i\in X_i \biggm| \forall
			x_i'\in X_i,\:\: p^ix_i
		\not\prec_sp^ix_i'\bigg\}.
	\]
\end{definition}

The strict dominance selection functions are witnessing and upwards closed, demonstrating
that they provide an appropriate solution concept for multi-valued selection functions.

\begin{proposition}\label{proof7}
	$\eps_i^s$ is witnessing and upwards closed. 
\end{proposition}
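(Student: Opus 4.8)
The plan is to reduce membership in $\eps_i^s$ to a single non-domination inequality and then verify each property by elementary manipulation of maxima and minima. First I would fix an indexing function $I : X_i \rightarrow \pf(X_i \rightarrow R)$ and write $k = \lambda x'. \bigvee_{p \in Ix'} px'$ for the collapsed context. Since the join on $R = \pf(\R^n)$ is union and $\pf\pi_i$ preserves unions, this gives $k^i x' = \bigcup_{p \in Ix'} p^i x'$. The key reformulation is that $x \in \eps_i^s(c)$, for any context $c : X_i \rightarrow R$, holds precisely when $x$ is not strictly dominated, i.e. $\min(c^i x') \leq \max(c^i x)$ for every $x'$; this is just unfolding $\not\prec_s$. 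Throughout I would use that each $p^i x'$ is finite and nonempty, so its $\min$ and $\max$ are attained.

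For upwards closure, suppose $p_-$ is a choice function for $I$ with $x \in \eps_i^s(\ell)$, where $\ell = \lambda x'. p_{x'} x'$. Since $p_{x'} \in Ix'$ we have $\ell^i x' = p_{x'}^i x' \subseteq k^i x'$ for every $x'$, and passing to a superset can only raise the maximum and lower the minimum. Hence $\max(k^i x) \geq \max(\ell^i x)$ and $\min(k^i x') \leq \min(\ell^i x')$, and chaining $\min(k^i x') \leq \min(\ell^i x') \leq \max(\ell^i x) \leq \max(k^i x)$ yields the non-domination inequality for $k$, so $x \in \eps_i^s(k)$.

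For witnessing, suppose $x \in \eps_i^s(k)$. I would build the choice function by optimising each $x'$ independently. Because $\max(k^i x) = \max_{p \in Ix} \max(p^i x)$ is attained, pick $p_x \in Ix$ realising this maximum; and because, for each $x' \neq x$, $\min(k^i x') = \min_{p \in Ix'} \min(p^i x')$ is attained, pick $p_{x'} \in Ix'$ realising this minimum. Setting $\ell = \lambda x'. p_{x'} x'$ then gives $\max(\ell^i x) = \max(k^i x)$ and $\min(\ell^i x') = \min(k^i x')$ for $x' \neq x$, so the required inequality $\min(\ell^i x') \leq \max(\ell^i x)$ follows from the one for $k$ (with $x' = x$ being trivial). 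Hence $x \in \eps_i^s(\ell)$, as needed.

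The hard part will be the witnessing direction, where one must exhibit a single choice function keeping $x$ undominated against all competitors simultaneously. What makes this tractable is that the condition decouples across the different $x'$: raising the attained maximum at $x$ and lowering the attained minimum at each competitor $x' \neq x$ are independent choices that never conflict, while finiteness guarantees these extremal witnesses exist. This is precisely the kind of simultaneity that can fail across rounds, which is why witnessing is not preserved under products even though it holds here.
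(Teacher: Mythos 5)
Your proof is correct and follows essentially the same route as the paper's: both reduce membership in $\eps_i^s$ to the inequality $\min \leq \max$ between projected outcome sets, prove witnessing by choosing $p_x$ attaining the maximum of $\bigcup_{p\in Ix}p^i x$ and each $p_{x'}$ ($x'\neq x$) attaining the minimum of $\bigcup_{p\in Ix'}p^i x'$, and prove upwards closure from the monotonicity of $\max$ and $\min$ under passing to the union. Your write-up is somewhat more explicit than the paper's (e.g.\ handling the $x'=x$ case and noting that $\pf\pi_i$ preserves unions), but there is no substantive difference.
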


The product of strict dominance selection functions provides an example of when the
product of two witnessing selection functions is not witnessing.

\begin{proposition}\label{proof8}
	Let $X=\{0,1\}$ and let $\eps = \eps_1^s$ and $\delta = \eps_2^s$ (so $\eps,\delta
	: \J_R^{\pf} (X)$ for $R = \pf (\R^2)$). Then $(\eps\otimes\delta)$ is not
	witnessing.
\end{proposition}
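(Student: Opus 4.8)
The plan is to exhibit a single indexing function that directly witnesses the failure. Write $Z = X\times X = \{0,1\}^2$, so that $\eps\otimes\delta : \J_R^{\pf}(Z)$ with $R = \pf(\R^2)$, reading the first coordinate of an outcome as player $1$'s payoff (governed by $\eps = \eps_1^s$) and the second as player $2$'s (governed by $\delta = \eps_2^s$). The key simplification I would use is that an indexing function $I : Z\to\pf(Z\to R)$ enters the witnessing condition only through the value sets $V_{z'} = \{\,p z' \mid p\in Iz'\,\}$: the collapsed context is $\lambda z'.\bigvee V_{z'}$ (which for $R = \pf(\R^2)$ is a union), while a choice function amounts to selecting one element of $V_{z'}$ \emph{independently} at each node $z'$. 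So it suffices to assign to each of the four nodes a finite set of outcome sets, make all but one a singleton, and show that the union at the exceptional node accepts a target play while each individual choice rejects it.

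I would take the exceptional node to be $(0,0)$ and equip it with two \emph{anti-correlated} pieces,
\[ P = \{(10,0)\},\qquad Q = \{(0,10)\}, \]
so $P$ is good for player $1$ and bad for player $2$ and $Q$ is the reverse, together with the single outcome sets $\{(1,5)\}$ at $(0,1)$ and $\{(2,0)\}$ at both $(1,0)$ and $(1,1)$. The collapsed context $k$ then has $k(0,0) = P\cup Q = \{(10,0),(0,10)\}$ and agrees with these singletons elsewhere, while the two choice functions yield the contexts $k_P$ and $k_Q$ obtained by replacing $k(0,0)$ with $P$ or with $Q$.

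The verification reduces to three routine $\min$/$\max$ comparisons. Under $k$: at $x=0$ player $2$ compares the second-coordinate sets $\{0,10\}$ (for $y=0$) and $\{5\}$ (for $y=1$), neither strictly dominates the other, so both survive; player $1$ then compares a first-coordinate set containing $10$ at $x=0$ with $\{2\}$ at $x=1$, so $x=0$ survives, giving $(0,0)\in(\eps\otimes\delta)(k)$. Under $k_P$ the set at $y=0$ collapses to second coordinate $\{0\}\prec_s\{5\}$, so player $2$ discards $y=0$ and $(0,0)$ is rejected. Under $k_Q$ player $2$ keeps only $y=0$, but then the surviving first-coordinate set at $x=0$ is $\{0\}\prec_s\{2\}$, so player $1$ discards $x=0$ and $(0,0)$ is again rejected. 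Since these two choices exhaust the choice functions for $I$ and both fail, $\eps\otimes\delta$ is not witnessing.

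The main obstacle is conceptual rather than computational: one must arrange the two pieces at the split node so that each possible witness triggers a \emph{different} failure mode—one via player $2$'s second-coordinate dominance, the other via player $1$'s first-coordinate dominance—while the union avoids both. This is precisely the ``witnesses cannot be chosen simultaneously'' heuristic, and it is what forces the anti-correlation between the two coordinates in $P$ and $Q$; once that structure is found, the checks are just comparisons of minima and maxima of small finite sets.
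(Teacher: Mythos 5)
Your proof is correct and is essentially the paper's own argument: like the paper, you place two anti-correlated outcome sets at the single split node $(0,0)$ (the paper uses $\{(1,-1)\}$ and $\{(-1,1)\}$ there, with $\{(0,0)\}$ at every other node) so that the collapsed context accepts $(0,0)$ while each of the two possible choice functions triggers a strict-dominance elimination by a different player. Your preliminary reduction of indexing functions to their value sets $V_{z'}$ is a valid and clean way of packaging the construction of $I$ (realizable, e.g., by constant functions), but the counterexample mechanism is identical to the paper's.
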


Consider the game given by $\G = \big( q, (\eps_1^s,\cdots,\eps_n^s)\big)$. By
\ref{nRound}, we know that $\big( \bigotimes_{i=1}^n \eps_i \big)(q)$ is equal to the set
of $\Sigma(\G)$ plays. The set of strategies $\Sigma(\G)$ is then the maximal set of
strategies such that no strategy is strictly dominated in any subgame. When each $X_i$ is
finite, this is the same as $\big( \bigotimes_{i=1}^n \eps_i^s\big)(q)$ computing the
plays of strategies obtained via the iterated removal of strictly dominated
strategies. This iterated removal of strictly dominated strategies is a well known
solution concept \cite{fudenberg1991game,Mas-Colell1995,osborne1994course} and goes back to \cite{gale1953theory} and
\cite{luce2012games}. In economic game theory this concept is typically applied statically
to normal form games whereas in the context of selection functions we apply it
to sequential games. 

\bibliography{bib.bib}

\section{Appendix: Proofs}

\begin{proof}[Proof of Proposition \ref{proof1}]
  
	Suppose $R$ is total and $I : X\rightarrow \pf (X\rightarrow R)$. Then for all
	$x'\in X$ there exists $p_{x'}\in Ix'$ such that
	\[
		\bigvee_{p\in Ix'}px' = p_{x'}x'.
	\]
	Then
	\[
		x\in\epsilon\bigg( \lambda x'.\bigvee_{p\in Ix'}px'\bigg) \Longrightarrow
		x\in\epsilon\big(\lambda x'. p_{x'}x'\big)
	\]
\end{proof}

\begin{proof}[Proof of Proposition \ref{proof3}]
	Suppose $(x,y) \in (\eps\otimes\delta)(\lambda (x',y') p_{(x',y')}(x',y'))$ where
	$p_{(-,-)} : X\times Y\rightarrow (X\times Y\rightarrow R)$ is a choice function
	for some $I : X\times Y\rightarrow \pf(X\times Y\rightarrow R)$. Then
	\begin{enumerate}
		\item $y\in \delta(\lambda y'. p_{(x,y')}(x,y'))$; and
		\item \[
				x\in \eps\Big( \lambda x'.
					\bigvee_{y'\in\delta(p_{(x',-)}(x',-))}
				p_{(x',y')}(x',y')\Big).
				\]
	\end{enumerate}
	In order to show that $(x,y)\in (\eps\otimes\delta)\big(\lambda (x',y').
	\bigvee_{p\in I(x',y')}p(x',y')\big)$ we need to show 
	\begin{enumerate}[label=(\alph*)]
		\item  \[
				y\in \delta\Big( \lambda y'. \bigvee_{p\in
				I(x,y')}p(x,y')\Big)
			\]
			and
		\item \[
				x\in \eps\Big(\lambda x'. \bigvee_{y'\in Ax'}
				\bigvee_{p\in I(x',y')} p(x',y') \Big).
			\]
			where $Ax' = \delta \Big(\lambda y''. \bigvee_{p\in
			I(x',y'')}p(x',y'') \Big)$.
	\end{enumerate}
	As $\delta$ is upwards closed we have that, for all $x'\in X$,
	\[
		y'\in \delta \Big( \lambda y''. p_{(x',y'')}(x',y'')\Big) \Rightarrow
		y'\in Ax'.
	\]
	In particular, (a) holds. By using upwards closure of $\eps$ twice, we have
	\begin{align*}
		x\in \eps\Big( \lambda x'.
					\bigvee_{y'\in\delta(p_{(x',-)}(x',-))}
				p_{(x',y')}(x',y')\Big)
		&\Rightarrow x\in \Big( \lambda x'. \bigvee_{y'\in Ax'}
		p_{(x',y')}(x',y')\Big)
		\\
		&\Rightarrow x\in \eps\Big( \lambda x'. \bigvee_{y'\in Ax'} \bigvee_{p\in
		I(x',y')}p(x',y')\Big)
	\end{align*}
\end{proof}

\begin{proof}[Proof of Proposition \ref{proof5}]
	Let $(x,y)\in(\eps\otimes\delta)(q)$. By \ref{BIGTHEOREM}, $(x,y)$ is the
	play of some rational strategy profile $(\sigma_1,\sigma_2)$. Then there exists
	some function $y(-):X\rightarrow Y$ where, for all $x'\in X$, $y(x')\in\delta\big(
	q(x',-)\big)$ and $\sigma_1 \in\eps(\lambda x'.\: q(x',y(x'))\big)$. By coinciding
	indifference, $\sigma_1 \in\eps\big(\lambda x'. q(x',\sigma_2 x')\big)$.

	Conversely, subgame perfect plays are rational. Hence, if $(x,y)$ is a subgame
	perfect play, then $(x,y)\in(\eps\otimes\delta)(q)$ by \ref{BIGTHEOREM}.
\end{proof}

\begin{proof}[Proof of corollary \ref{muteOutcome}]
	The proof proceeds by a routine induction on $n$, noting that the case $n=2$ is
	trivial. When $j\ne 1$ the result follows easily by choosing
	\[
		x_1\in\eps_1\Big(\lambda y_1. \bigvee_{(y_2,\cdots, y_n)\in
		A(y_2,\cdots,y_n)} q(y_1,\cdots,y_n) \Big)
	\]
	where $A(y_1) = (\bigotimes_{i=2}^m \eps_i)(q(y_1,-))$ and applying the inductive
	hypothesis to the game $(q(x_1), \big( \eps_i \big)_{i=2}^n )$. For the case
	$j=1$, note that the function
	\[
		\lambda y_1,\cdots, y_n \bigvee_{y_n\in\eps_n( q(y_1,\cdots, y_{n-1},-))}
		q(y_1,\cdots,y_n)
	\]
	is mute in every variable except $y_1$. Then, using \ref{mOutcome},
	\begin{align*}
		&(x_1,x_n)\in (\eps_1\otimes\eps_n)(q_1)
		\\
		&\Leftrightarrow x_1\in\eps_1\Big(\lambda y_1.
		\bigvee_{y_n\in\eps_n(q_1(y_1,-))}q_1(y_1,y_n)\Big) \text{ and } x_n \in
		\eps_n \big(q_1 (x_1,-)\big)
		\\&\\
		&\Leftrightarrow \exists x\in\prod_{i=2}^{n-1}X_i. \:\: (x_1,x)\in
		\Big( \bigotimes_{i=1}^{n-1} \eps_i \Big)\Big(\lambda
		y_1,y.\bigvee_{y_n\in\eps_n(q(y_1,y,-))} q(y_1,y,y_n) \Big)
	      \\
		&\hphantom{\Leftrightarrow} 
		\text{ and } x_n \in \eps_n \big( q(x_1,x,-)\big)
	      \\&\\
		&\Leftrightarrow \exists x\in \prod_{i=1}^{n-1}X_i. \:\:
		(x_1,x,x_n)\in\Big(\bigotimes_{i=1}^n \eps_i \Big)(q).
	\end{align*}
\end{proof}

\begin{proof}[Proof of Proposition \ref{nRound}]
	We prove the forward directions of the two equivalences first. The proof proceeds
	by induction on $n$, noting that the cases $n=1$ are trivial.

	$(1):$ Suppose \[x=(x_1,\cdots,x_n)\in\bigg(\bigotimes_{i=1}^n \epsilon_i \bigg)
	(q)\]. As
$\epsilon_1$ is witnessing, it is the play of some rational
strategy profile $(x_1,
f:X_1\rightarrow \prod_{i=2}^n X_i)$ of the two round game $\big( (X_1,
\prod_{i=2}^n X_i), (\eps_1, \bigotimes_{i=2}^n \eps_i), q
\big)$. By hypothesis we have that 
\[ \bigg(
\bigotimes_{i=2}^n \epsilon_i \bigg)(q(y_1,-)) \]
 is a subset of the set of
$\Sigma(\G^{y_1})$ plays for all $y_1\in X_1$ where 
\[\G^{y_1}= \big(
(X_i)_{i=2}^n, (\eps_i)_{i=2}^n, q(y_1,-)\big)\]
 Hence
$f(y_1)$ is the play of some $\Sigma(\G^{y_1})$-consistent
strategy profile $\sigma^{y_1}$ for all $y_1\in X_1$. Then the
strategy profile $\tau$ for $\G$ given by
\begin{align*}
\tau_1 &= x_1
\\
\tau_{i+1}(y_1,\cdots,y_i) &= \sigma^{y_1}_{i+1}(y_2,\cdots,y_i)
\end{align*}
is such that $\tau_i\in\Sigma(\G)$ for all $i$ and the play of $\tau$
is $x$.

$(2):$ Suppose that $x=(x_1,\cdots,x_n)$ is the $\Sigma(\G)$ play of
$(\sigma_1,\cdots,\sigma_n)$. A simple check demonstrates that for all $y_1\in X_1$, we
have that $\sigma_2,\cdots,\sigma_n\in\Sigma(\G^{y_1})$. By hypothesis, the strategic play
$y_1^{\sigma_{-1}}$ of $(\sigma_2,\cdots,\sigma_n)$ for the game $\G^{y_1}$ is such that
\[
	y_1^{\sigma_{-1}} \in \bigg( \bigotimes_{i=2}^n\eps_i\bigg)
\big(q(y_1,-)\big).
\]
In particular, 
\begin{equation}\tag{$\star$}
	(x_2,\cdots,x_n)\in \bigg(\bigotimes_{i=2}^n\eps_i\bigg)\big( q(x_1,-)\big).
\end{equation}

As $x_1=\sigma_1\in\Sigma(\G)$ there exists $\tau = (\tau_2,\cdots,\tau_n)$ with each
$\tau_i\in\Sigma(\G)$ such that 
\[
	x_1 \in \eps_1 \big( \lambda y_1. q(y_1^{\tau})\big)
\]
and, for all $y_1\in X_1$,
\[
	(y_1^\tau)_{-1}\in \bigg(\bigotimes_{i=2}^n\eps_i\bigg) \big(q(y_1,-)\big).
\]
As $\eps_1$ is upwards closed,
\[
	x_1\in \eps_1\bigg( \lambda (y_1). \bigvee_{z\in Ay_1} q(x_1,z)\bigg)
\]
where $A(y_1) = \Big( \bigotimes_{i=2}^n \eps_i \Big)\big(q(y_1,-)\big)$. From this and
$(\star)$ we conclude that
\[
	x\in\bigg(\bigotimes_{i=1}^n\eps_i\bigg)(q).
\]

As for the backward directions, for $i<n$ consider the construction
$\delta^i_I:\J_R^{\pf}(X_i)$ as in the proof of
\ref{BIGTHEOREM} and let $q_i : \Big(\prod_{j=1}^{n-1} X_j\Big)\times (X_i\rightarrow
R)\rightarrow R$ be given by $(x,p)\mapsto px_i$. The converse directions are then a
corollary of \ref{muteOutcome} and \ref{BIGTHEOREM} by considering the game $(q,
(\eps_1,\cdots,\eps_{n-1},\delta_I))$ for each $i$.

\end{proof}

\begin{proof}[Proof of Proposition \ref{proof7}]
	Let $I : X_i\rightarrow \pf(X_i\rightarrow \pf(\R^n))$. Suppose first that
	\[x\in\eps_i^s(\lambda x'. \bigcup_{p\in Ix'} px').\]
	That is, for all $x'\in X_i$,
	\[
		\max (\bigcup_{p\in Ix}p^ix )  \geq \min (\bigcup_{p\in Ix'}p^i x')
	\]
	Then, setting $p_x\in Ix$ to be a function attaining the maximum of $\bigcup_{p\in
	Ix}p^ix$ and, for $x'\ne x$, setting $p_{x'}\in Ix'$ to be a function
	attaining the minimum of $\bigcup_{p\in Ix'}p^i x'$, we define a choice
	function $p_- : X_i\rightarrow (X_i\rightarrow \pf(\R^n))$ such that
	\[
		x \in \eps_i^s (\lambda x'. p_{x'}x').
	\]
	Hence $\eps_i^s$ is witnessing.

	It is similarly easy to show that $\eps_i^s$ is upwards closed as
	\[
		\max (p_{x}^i x ) \geq \min (p_{x'}^i x') \Longrightarrow 
		\max(\bigcup_{p\in Ix} p^i x) \geq \min (\bigcup_{p\in Ix'}p^ix').
	\]
\end{proof}

\begin{proof}[Proof of Proposition \ref{proof8}]
	Define contexts $p_\eps, p_\delta, p_0 : X^2 \rightarrow \pf(\R^2)$ by
	\begin{align*}
		p_\eps (x,x') &=\begin{cases}
			\{(1,-1)\} &x=x'=0 \\
			\{(0,0)\} &\text{otherwise}
		\end{cases}
			    \\&\\
		p_\delta (x,x') &= \begin{cases}
			\{(-1,1)\} &x=x'=0 \\
			\{(0,0)\}  &\text{otherwise}
		\end{cases}
			      \\&\\
			p_0(x,x') &= \{(0,0)\}.
	\end{align*}
	Define $I : X^2\rightarrow \pf (X^2\rightarrow \pf(\R^2))$ by
	\[
		I(x,x') =\begin{cases}
			\{p_\eps, p_\delta \} & x=x'=0
			\\
			\{ p_0 \} &\text{otherwise.}
		\end{cases}
	\]
	We think of $\eps$ and $\delta$ as playing the following game where the outcome
	function is chosen nondeterministically. 
	\begin{center}\begin{tikzpicture}
		\node (e) {$\eps$};
		\path (e) +(1.5,1.5) node (d1) {$\delta$};
		\path (e) +(1.5,-1.5) node (d2) {$\delta$};

		\draw (e.north east) -- node[above,pos=0.4,yshift=3pt]{$0$} (d1.225);
		\draw (e.south east) -- node[below,pos=0.4,yshift=-3pt]{$1$} (d2.135);
		
		\path (d1) +(1,1) node (o1)[anchor=west]{$\{p_\eps,p_\delta\}$};
		\draw (d1.north east) -- node[above,pos=0.4,yshift=3pt]{$0$} (o1.west);

		\path (d1) +(1,-1) node (o2)[anchor=west]{$\{p_0\}$};
		\draw (d1.south east) -- node[below,pos=0.4,yshift=-3pt]{$1$} (o2.west);

		\path (d2) +(1,1) node (o3)[anchor=west]{$\{p_0\}$};
		\draw (d2.north east) -- node[above,pos=0.4,yshift=3pt]{$0$} (o3.west);
		
		\path (d2) +(1,-1) node (o4)[anchor=west]{$\{p_0\}$};
		\draw (d2.south east) -- node[below,pos=0.4,yshift=-3pt]{$1$} (o4.west);
	\end{tikzpicture}\end{center}
	We will see that $(\eps\otimes\delta)$ fails to be witnessing as $\eps$ is
	satisfied with playing $0$ in the case $(0,0)$ results in $p_\eps$ and $\delta$ is
	satisfied playing $0$ when $(0,0)$ results in $p_\delta$, but there is no possible
	resulting context under which both $\eps$ and $\delta$ are happy to choose $0$.
	Indeed, simple checks verify that
	\[
		(0,0) \in (\eps\otimes\delta) \bigg(\lambda (x,y). \bigcup_{p\in I(x,y)}
		p(x,y) \bigg)
	\]
	but that there is no choice function $p_ : X^2\rightarrow
	(X^2\rightarrow\pf(R^2))$ for $I$ with $(0,0) \in (\eps\otimes\delta)(\lambda
	(x,y). p_{(x,y)}(x,y))$.
\end{proof}

\end{document}